\providecommand{\algorithmname}{Algorithm}
\numberwithin{equation}{section}
\numberwithin{figure}{section}
\theoremstyle{plain}
\newtheorem{thm}{\protect\theoremname}
  \theoremstyle{definition}
  \newtheorem{example}[thm]{\protect\examplename}
  \theoremstyle{definition}
  \newtheorem{defn}[thm]{\protect\definitionname}
  \theoremstyle{remark}
  \newtheorem{claim}[thm]{\protect\claimname}
  \theoremstyle{plain}
  \newtheorem{lem}[thm]{\protect\lemmaname}
  \theoremstyle{plain}
  \newtheorem{cor}[thm]{\protect\corollaryname}
\newcommand\SP[1][1.5em]{%
  \mbox{\kern.06em\vrule height.3ex}%
  \vbox{\hrule width#1}%
  \hbox{\vrule height.3ex}}
  \providecommand{\claimname}{Claim}
  \providecommand{\corollaryname}{Corollary}
  \providecommand{\definitionname}{Definition}
  \providecommand{\examplename}{Example}
  \providecommand{\lemmaname}{Lemma}
\providecommand{\theoremname}{Theorem}
\begin{document}

\title{Data Reduction in markov model using em algorithm}

\author{Atanu Kumar Ghosh, Arnab Chakraborty}
\begin{abstract}
This paper describes a data reduction technique in case of a markov
chain of specified order. Instead of observing all the transitions
in a markov chain we record only a few of them and treat the remaining
part as missing. The decision about which transitions to be filtered
is taken before the observation process starts. Based on the filtered
chain we try to estimate the parameters of the markov model using
EM algorithm. In the first half of the paper we characterize a class
of filtering mechanism for which all the parameters remain identifiable.
In the later half we explain methods of estimation and testing about
the transition probabilities of the markov chain based on the filtered
data. The methods are first developed assuming a simple markov model
with each probability of transition positive, but then generalized
for models with structural zeroes in the transition probability matrix.
Further extension is also done for multiple markov chains. The performance
of the developed method of estimation is studied using simulated data
along with a real life data.
\end{abstract}

\keywords{EM Algorithm, Markov Chain, Identifiability, Estimation, SEM Algorithm}

\maketitle

\section{Introduction}

Applications of statistical models often encounter with datasets which
are too large to store or to gain meaningful interpretation. This
motivates the need for a suitable data filtering mechanism which stores
only a subset of the available information such that based on the
observed data reasonable inferences can be drawn about the parameter.
This paper describes such a filtering mechanism in case of discrete
markov models. Discrete markov chains are the simplest dependent structure
that one can think of and are very useful for modeling a wide range
of scientific problems in nature. Some important applications include
modeling of dry and wet spells (P. J. Avery and D. A. Henderson (1999)),
deoxyribonucleic acid (DNA) sequences (P. J. Avery and D. A. Henderson
(1999) ), study of chronic diseases ( B. A. Craig and A. A. Sendi
(2002) ). 

Any stochastic process $\mathbf{X}=\{X_{1},X_{2},\cdots,X_{n}\}$
having a finite set $\mathcal{S}$ as its state space, is said to
be a markov process of order $s$ if 
\[
P(X_{n}=a_{n}|\,X_{n-1}=a_{n-1},X_{n-2}=a_{n-2},\cdots,X_{1}=a_{1})
\]
\[
=P(X_{n}=a_{n}|\,X_{n-1}=a_{n-1},X_{n-2}=a_{n-2},\cdots,X_{n-s}=a_{n-s})
\]
For notational convenience let us denote the state space as $\mathcal{S=}\{1,2,\cdots,k\}.$
Further we assume that the markov process has stationary transition
probabilities, which means 
\[
P(X_{n}=a_{n}|\,X_{n-1}=a_{n-1},X_{n-2}=a_{n-2},\cdots,X_{n-s}=a_{n-s})=p_{a_{n-s},\cdots,a_{n-1}:a_{n}}
\]
does not depend on $n.$ For $s=1$ we have a simple markov chain
with finite state space. Any markov chain of $s^{th}$ order can be
treated as a simple markov chain with suitable parameters. So in this
paper we will develop the methods assuming a simple markov chain which
will be equally applicable for any markov process of higher order.
A markov chain can be completely described by the initial state and
the set of transition probabilities. Here we shall consider the initial
state of a markov chain to be known and try to make inferences about
the transition probabilities based on the observed data. More specifically
inferences regarding the transition probability matrix can help us
to answer many specific questions regarding the markov process which
we usually encounter.

There is an extensive literature available on the statistical inferences
of finite markov chains based on complete data. Billingsley(4) gives
a good account of the mathematical aspects of different techniques
regarding inferences about the transition probabilities which includes
Whittle's formula, maximum likelihood and chi-square methods. Estimation
of transition probabilities and testing goodness of fit from a single
realisation of a markov chain has been studied by Bartlett(3). Goodman
and Anderson(1) derived the estimates of the transition probabilities
and their hypothesis when there are more than one realisation of a
single markov chain. Their paper also described the asymptotic properties
of the methods when the number of realisations increase. All these
works assume the observed data to be one or more long, unbroken observations
of the chain. In this paper we assume that there is a single realisation
of the markov chain which is not completely observed. The observed
broken chain which results from the filtering mechanism is therefore
not markov. 

Based on the filtering mechanism we will observe only certain transitions
of a markov chain and treat the remaining part of the chain as missing.
From the observed data we will estimate the transition probabilities
using EM algorithm. EM algorithm is a standard tool for maximum likelihood
estimation in incomplete data problems. Since the missingness in the
data occurs due to the filtering process, the data are not missing
at random (NMAR). Here the missing mechanism is nonignorable but known.
Unlike the conventional uses of EM algorithm where missingness occurs
naturally, here we introduce missingness deliberately to reduce the
size of the observed data. The E step of the EM algorithm requires
to find the conditional expectation of the missing data given the
observed data. This is achieved by defining the all possible missing
paths for a transition of any order and finding the probability of
the same. The standard error of the EM estimate is obtained by the
supplemented EM algorithm (SEM) (Meng and Rubin). Usually the standard
error of the EM estimate is obtained by inverting the observed information
matrix. In our case the observed likelihood cannot be obtained explicitly
and hence we avoid the calculation of the observed information matrix.
SEM is a technique to calculate asymptotic dispersion matrix of the
EM estimate without inverting the observed information matrix.

Section 2 describes the setup of the problem. Section 3 deals with
the identifiability issues of the parameter that arise due to filtering
of data. In section 2 we assume that the transition probability matrix
consists of all positive elements. This assumption is relaxed in section
3 where we allow some structural zeroes in the transition probability
matrix. We describe the additional modification we need in the filtering
mechanism due to such relaxation. Section 5 describes the methods
of estimation and testing the transition probabilities. In this section
we also describe the estimation of standard errors of the estimates
by the SEM algorithm. Section 6 describes the generalization of the
above methods in case of multiple markov chains. In section 7 we demonstrate
the methods developed using simulated data. A real life data analysis
is demonstrated in section 8. Section 9 is the appendix which has
the proofs of a major theorem of this paper.

\section{Setup}

Let $\mathbf{X}$ be a simple markov chain with finite state space
$\mathcal{S}=\{1,2,\cdots,k\}$ and transition probability matrix
$P=((p_{ij}))_{k\times k}.$ Let us first assume that $0<p_{ij}<1,\,\forall\,i,j$.
We shall relax this assumption later and consider the case where we
allow some $p_{ij}$'s to be zero. The transition probability matrix
$P$ satisfy the standard condition 
\[
P\mathbf{1=1}\;i.e.\quad{\displaystyle \sum_{j}p_{ij}=1\quad\forall i}.
\]
Hence we have $k^{2}-k$ independent parameters. We define the vector
of the parameters 
\[
\boldsymbol{\theta}=(p_{11},p_{12},\cdots,p_{1(k-1)},p_{21},p_{22},\cdots,p_{2(k-1)},\cdots,p_{k1},p_{k2},\cdots,p_{k(k-1)})^{'}
\]
\[
=(\theta_{1},\theta_{2},\cdots,\theta_{d})^{'}
\]
where $d=k^{2}-k$ and the parameter space is 
\[
\boldsymbol{\Theta}=\{\boldsymbol{\theta}:\sum_{j=1}^{k-1}p_{ij}<1,\quad\textrm{for }i=1,\cdots,k\}
\]
\[
=\{\boldsymbol{\theta}:\sum_{j=1}^{k-1}\theta_{j}<1\;,\sum_{j=0}^{k-1}\theta_{(i-1)k+j}<1,\quad\textrm{for }i=2,\cdots,k\}.
\]
Now suppose we consider a single realization $\mathbf{x}$ of the
chain and the number of transitions from state $i$ to state $j$
in this realization is $n_{ij}$. We assume that the markov process
is continued sufficiently long enough so that the realization $\mathbf{x}$
contains each transition at least once, that is, $n_{ij}>0$ for all
$i$ and $j.$ The matrix of transition count is 
\[
N=\begin{bmatrix}n_{11} & n_{12} & \cdots & \cdots & n_{1k}\\
n_{21} & n_{22} & \cdots & \cdots & n_{2k}\\
 &  & \vdots\\
n_{\overline{k-1}1} & n_{\overline{k-1}2} & \cdots & \cdots & n_{\overline{k-1}k}\\
n_{k1} & n_{k2} & \cdots & \cdots & n_{kk}
\end{bmatrix}.
\]
In this paper we propose a data acquisition protocol which suggests
that instead of observing the entire realization $\mathbf{x},$ we
record only some of the transitions and treat the remaining part of
the chain as missing. The decision about which transitions we record
is described in the form of a filter matrix $\mathbf{F}=((f_{ij}))_{1\leq i,j\leq k}$
which contains 0 and 1 as elements. In particular we record the transition
from state $i$ to state $j$$\;\textrm{if }f_{ij}=1$. If $\mathbf{X}$
is the complete chain then let $\phi_{\mathbf{F}}(\mathbf{X})$ denote
the chain filtered using $\mathbf{F}.$
\begin{example}
\label{example1}Consider a three state markov chain $\mathbf{x}$
as 

\[
112312232123331121331
\]
Suppose we are given a filter matrix 
\[
\mathbf{F}=\begin{bmatrix}1 & 0 & 0\\
0 & 1 & 0\\
1 & 1 & 0
\end{bmatrix}.
\]
Then the transitions we record in $\phi_{\mathbf{F}}(\mathbf{x})$
are
\begin{equation}
\begin{array}{c}
1\rightarrow1\\
2\rightarrow2\\
3\rightarrow1\\
3\rightarrow2
\end{array}\label{eq:1}
\end{equation}
Then the filtered chain is 
\[
11\SP312232\SP\SP\SP\SP311\SP\SP\SP31
\]

\end{example}
In the filtered chain the missing states are indicated by the symbol
$"\SP"$ which we call ``blank''. The example shows that besides
the transitions (\ref{eq:1}) there may be some transitions which
are indirectly recorded in the filtered chain (such as $2\rightarrow3$
is recorded even if $f_{23}=0).$ Any transition $i\rightarrow j$
may be recorded indirectly in the filtered chain if there exist some
states $a$ and $b$ such that $f_{ai}=1$ and $f_{jb}=1.$ Thus all
the transitions in the filtered chain may be classified into one of
the three categories: 
\begin{itemize}
\item directly recorded ($f_{ij}=1$)
\item indirectly recorded ($f_{ij}=0$ but the transition occurs in the
filtered chain, e.g. $2\rightarrow3$ in Example 1)
\item unobserved ( $f_{ij}=0$ and the transition does not appear in the
filtered chain e.g. $3\rightarrow3$ in Example 1)
\end{itemize}

\section{Identifiability}

In this section we shall discuss about the identifiability of the
parameters based on the filtered chain. We note that our filtered
chain no longer possesses the markov property. While applying the
filtering mechanism, if we record only a very few transitions then
all the parameters of the markov chain may not be identifiable. For
example in a markov chain with state space $\{1,2,\cdots,10\}$ if
we record only the transition $1\rightarrow1$ then some parameters
, say $p_{55}$, are not identifiable. We need to study how much data
we can throw away, so that the problem still remains identifiable.
Thus our main aim, in this section, will be to identify a class of
filter matrices so that data generated by any filter matrix of that
class will retain the identifiability of the parameters. But first
we define what is meant by identifiability of parameter on the basis
of a random sample. 
\begin{defn}
Let $\mathbf{X}$ be a random sample from a distribution characterized
by the parameter $\boldsymbol{\theta}$ and $L(\boldsymbol{\theta},\mathbf{x})$
be the likelihood. Then the parameter $\boldsymbol{\theta}$ is said
to be identifiable on the basis of $\mathbf{X}$ if for any two distinct
values $\boldsymbol{\theta}_{0}$ and $\boldsymbol{\theta}_{1}$ in
the parameter space 

\[
L(\boldsymbol{\theta}_{1},\mathbf{x})\neq L(\boldsymbol{\theta}_{2},\mathbf{x})
\]

\end{defn}
\bigskip{}

Suppose $\mathbf{X}$ is a random sample drawn from a population characterized
by the parameter $\boldsymbol{\theta}$. Let $\mathbf{Y}=g(\mathbf{X})$
be function of $\mathbf{X}$. Given $\mathbf{X}$ we can always construct
$\mathbf{Y}$ through $g$. So if $\boldsymbol{\theta}$ is identifiable
on the basis of $\mathbf{Y}$, we can identify $\boldsymbol{\theta}$
also from $\mathbf{X}$. On the contrary, if $\boldsymbol{\theta}$
is unidentifiable on the basis of $\mathbf{X}$, then it is also unidentifiable
on the basis of $\mathbf{Y}.$ This is because, if we assume $\boldsymbol{\theta}$
to be identifiable on the basis of $\mathbf{Y},$ then given $\mathbf{X}$,
we can construct $\mathbf{Y}$ through $g$ and then $\boldsymbol{\theta}$
can be identified from $\mathbf{X}$, which is a contradiction. Thus
in general we have the following two results:
\begin{claim}
\label{claim3}a) If $\boldsymbol{\theta}$ is identifiable on the
basis of $\mathbf{Y}$, then $\boldsymbol{\theta}$ is also identifiable
on the basis of $\mathbf{X}$.

\qquad{}\qquad{}b) If $\boldsymbol{\theta}$ is unidentifiable on
the basis of $\mathbf{X}$, then $\boldsymbol{\theta}$ is also unidentifiable
on the basis of $\mathbf{Y}$.
\end{claim}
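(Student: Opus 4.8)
The plan is to exploit the single structural fact that $\mathbf{Y}=g(\mathbf{X})$ is a deterministic function of $\mathbf{X}$, so that the sampling distribution of $\mathbf{Y}$ is the pushforward of that of $\mathbf{X}$ along $g$. Since both the chain and the filtered chain live on discrete sample spaces, I would first record the marginalization identity
\[
L_{\mathbf{Y}}(\boldsymbol{\theta},\mathbf{y})=P_{\boldsymbol{\theta}}(\mathbf{Y}=\mathbf{y})=\sum_{\mathbf{x}:\,g(\mathbf{x})=\mathbf{y}}P_{\boldsymbol{\theta}}(\mathbf{X}=\mathbf{x})=\sum_{\mathbf{x}:\,g(\mathbf{x})=\mathbf{y}}L_{\mathbf{X}}(\boldsymbol{\theta},\mathbf{x}),
\]
which expresses each value of the $\mathbf{Y}$-likelihood as a sum of values of the $\mathbf{X}$-likelihood over a fibre of $g$. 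This identity is the engine of both parts.

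For part (a) I would argue directly. Fix distinct $\boldsymbol{\theta}_1\neq\boldsymbol{\theta}_2$ in $\boldsymbol{\Theta}$. Identifiability on the basis of $\mathbf{Y}$ supplies a point $\mathbf{y}_0$ with $L_{\mathbf{Y}}(\boldsymbol{\theta}_1,\mathbf{y}_0)\neq L_{\mathbf{Y}}(\boldsymbol{\theta}_2,\mathbf{y}_0)$. Suppose, for contradiction, that this pair failed to be separated on the basis of $\mathbf{X}$, i.e. $L_{\mathbf{X}}(\boldsymbol{\theta}_1,\mathbf{x})=L_{\mathbf{X}}(\boldsymbol{\theta}_2,\mathbf{x})$ for every $\mathbf{x}$. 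Summing this equality over the fibre $\{\mathbf{x}:g(\mathbf{x})=\mathbf{y}_0\}$ and invoking the displayed identity would force $L_{\mathbf{Y}}(\boldsymbol{\theta}_1,\mathbf{y}_0)=L_{\mathbf{Y}}(\boldsymbol{\theta}_2,\mathbf{y}_0)$, contradicting the choice of $\mathbf{y}_0$. Hence some sample point separates $\boldsymbol{\theta}_1$ and $\boldsymbol{\theta}_2$, and $\boldsymbol{\theta}$ is identifiable on the basis of $\mathbf{X}$.

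Part (b) then requires no new work, since it is exactly the contrapositive of (a): ``unidentifiable on $\mathbf{X}$'' is the negation of ``identifiable on $\mathbf{X}$'', and likewise for $\mathbf{Y}$, so (a) and (b) are logically equivalent. If a self-contained direct argument is preferred, I would instead start from distinct $\boldsymbol{\theta}_1\neq\boldsymbol{\theta}_2$ with $L_{\mathbf{X}}(\boldsymbol{\theta}_1,\mathbf{x})=L_{\mathbf{X}}(\boldsymbol{\theta}_2,\mathbf{x})$ for all $\mathbf{x}$ and sum over each fibre to obtain $L_{\mathbf{Y}}(\boldsymbol{\theta}_1,\mathbf{y})=L_{\mathbf{Y}}(\boldsymbol{\theta}_2,\mathbf{y})$ for all $\mathbf{y}$, which is precisely unidentifiability on the basis of $\mathbf{Y}$.

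The argument is essentially formal once the pushforward identity is in place, so I do not expect a genuine obstacle. The only point demanding care is the reading of the identifiability definition: the statement ``$L(\boldsymbol{\theta}_1,\mathbf{x})\neq L(\boldsymbol{\theta}_2,\mathbf{x})$'' should be understood as the two likelihood functions differing at some sample point (equivalently, that distinct parameters induce distinct sampling distributions), rather than as an inequality holding at every $\mathbf{x}$. With that reading the summation step is valid, and the discreteness of the state space guarantees that each fibrewise sum is well defined, with no integrability concerns to address.
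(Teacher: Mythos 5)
Your argument is correct and rests on the same underlying idea as the paper: since $\mathbf{Y}=g(\mathbf{X})$ is computable from $\mathbf{X}$, the distribution of $\mathbf{Y}$ is determined by that of $\mathbf{X}$, so parameters separated by $\mathbf{Y}$ are separated by $\mathbf{X}$, with part (b) obtained as the contrapositive of part (a). The paper leaves this at the informal level (``given $\mathbf{X}$ we can always construct $\mathbf{Y}$ through $g$''), whereas your fibrewise marginalization identity supplies the precise likelihood computation and correctly fixes the ``differ at some sample point'' reading of the paper's ambiguously stated identifiability definition.
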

In the present situation to prove that the parameters are identifiable
it is enough to consider a observed sample $\mathbf{x}$ such that
$\exists t$ such that $P_{\boldsymbol{\theta}}(\phi_{\mathbf{F}}(\mathbf{x})=t)\neq P_{\boldsymbol{\theta}'}(\phi_{\mathbf{F}}(\mathbf{x})=t)$
and prove that any two different values of the parameter $\boldsymbol{\theta}$
will yield different values of the observed likelihood $L_{obs}(\boldsymbol{\theta},\mathbf{x})$. 

Let $\mathcal{F}$ be the class of all $k\times k$ filter matrices.
We call a filter matrix $\mathbf{F}\in\mathcal{F}$ identifiable if
$P$ is identifiable with respect to $\phi_{\mathbf{F}}(\mathbf{X})$.
Let $\mathcal{I}_{\theta}\subseteq\mathcal{F}$ be the set of all
$k\times k$ filter matrices for which the parameter $\theta$ is
identifiable. Then $\mathcal{I}={\displaystyle \cap}\mathcal{I}_{\theta}$
is the set of identifiable filter matrices.

With this notation, the general fact stated in claim \ref{claim3}
is also applicable for the data generated by the filter matrices.
\begin{lem}
\label{lemma4}For $\mathbf{H},\mathbf{F}\in\mathcal{F}$, let $\phi_{\mathbf{H}}=g\circ\phi_{\mathbf{M}}$
for some function $g(.)$. Then $\mathbf{H}\in\mathcal{I}$ implies
$\mathbf{M}\in\mathcal{I}$ and $\mathbf{M}\in\mathcal{F}-\mathcal{I}$
implies $\mathbf{H}\in\mathcal{F}-\mathcal{I}.$\end{lem}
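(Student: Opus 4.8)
The plan is to directly apply the general identifiability principle recorded in Claim \ref{claim3}, treating the two filtered chains $\phi_{\mathbf{H}}(\mathbf{X})$ and $\phi_{\mathbf{M}}(\mathbf{X})$ as the two random objects $\mathbf{Y}$ and $\mathbf{X}$ appearing there. The hypothesis $\phi_{\mathbf{H}}=g\circ\phi_{\mathbf{M}}$ says precisely that $\phi_{\mathbf{H}}(\mathbf{X})$ is a (deterministic) function of $\phi_{\mathbf{M}}(\mathbf{X})$, so the data from $\mathbf{H}$ can be reconstructed from the data from $\mathbf{M}$ without any further knowledge of $\boldsymbol{\theta}$. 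This is exactly the setup of Claim \ref{claim3} with the r\^oles $\mathbf{Y}=\phi_{\mathbf{H}}(\mathbf{X})$ and $\mathbf{X}=\phi_{\mathbf{M}}(\mathbf{X})$, where $g$ is the required function.

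First I would unwind the definitions: recall that $\mathbf{H}\in\mathcal{I}$ means $\boldsymbol{\theta}$ is identifiable on the basis of $\phi_{\mathbf{H}}(\mathbf{X})$, and $\mathbf{M}\in\mathcal{I}$ means $\boldsymbol{\theta}$ is identifiable on the basis of $\phi_{\mathbf{M}}(\mathbf{X})$. For the first implication, assume $\mathbf{H}\in\mathcal{I}$, so $\boldsymbol{\theta}$ is identifiable from $\phi_{\mathbf{H}}(\mathbf{X})=g(\phi_{\mathbf{M}}(\mathbf{X}))$. Part (a) of Claim \ref{claim3}, applied with $\mathbf{Y}=\phi_{\mathbf{H}}(\mathbf{X})$ and $\mathbf{X}=\phi_{\mathbf{M}}(\mathbf{X})$, immediately yields identifiability of $\boldsymbol{\theta}$ on the basis of $\phi_{\mathbf{M}}(\mathbf{X})$, i.e. $\mathbf{M}\in\mathcal{I}$.

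The second implication is the contrapositive of the first, and I would obtain it the same way. Assume $\mathbf{M}\in\mathcal{F}-\mathcal{I}$, i.e. $\boldsymbol{\theta}$ is unidentifiable on the basis of $\phi_{\mathbf{M}}(\mathbf{X})$. Applying part (b) of Claim \ref{claim3} with the same identification $\mathbf{X}=\phi_{\mathbf{M}}(\mathbf{X})$ and $\mathbf{Y}=\phi_{\mathbf{H}}(\mathbf{X})=g(\phi_{\mathbf{M}}(\mathbf{X}))$ gives unidentifiability on the basis of $\phi_{\mathbf{H}}(\mathbf{X})$, that is, $\mathbf{H}\in\mathcal{F}-\mathcal{I}$.

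Because the lemma is essentially a transcription of Claim \ref{claim3} into the filter-matrix notation, I do not expect a genuine mathematical obstacle; the only point requiring care is bookkeeping. Specifically, I would make sure the functional relation is stated in the right direction ($\phi_{\mathbf{H}}$ is a function of $\phi_{\mathbf{M}}$, so $\mathbf{M}$ carries at least as much information as $\mathbf{H}$) and that the construction of $g$ does not secretly depend on $\boldsymbol{\theta}$, since the argument in Claim \ref{claim3} relies on building $\mathbf{Y}$ from $\mathbf{X}$ using only the known map $g$. One subtlety worth flagging is that the statement writes $\phi_{\mathbf{H}}=g\circ\phi_{\mathbf{M}}$ but then refers to $\mathbf{M}\in\mathcal{I}$ while the composition names $\phi_{\mathbf{M}}$; I would verify that $\mathbf{H}$ and $\mathbf{M}$ (rather than a stray $\mathbf{F}$) are the intended matrices throughout, so that the two implications line up cleanly with parts (a) and (b) respectively.
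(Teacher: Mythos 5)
Your proposal is correct and follows exactly the route the paper intends: the paper offers no separate proof of Lemma \ref{lemma4}, stating only that the general fact in Claim \ref{claim3} applies to data generated by filter matrices, which is precisely your application of parts (a) and (b) with $\mathbf{Y}=\phi_{\mathbf{H}}(\mathbf{X})$ and $\mathbf{X}=\phi_{\mathbf{M}}(\mathbf{X})$. Your observation that the lemma's hypothesis names $\mathbf{F}$ but the conclusion uses $\mathbf{M}$ correctly identifies a typo in the statement rather than a mathematical issue.
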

\begin{example}
Let 
\[
\mathbf{H}=\begin{bmatrix}1 & 0 & 0\\
0 & 1 & 0\\
1 & 1 & 0
\end{bmatrix}\qquad\textrm{and}\qquad\mathbf{M}=\begin{bmatrix}1 & 1 & 0\\
0 & 1 & 0\\
1 & 1 & 0
\end{bmatrix}
\]
$\mathbf{M}$ is same as $\mathbf{H}$ except that for $\mathbf{M}$
we directly observe one more transition $2\rightarrow3$ than $\mathbf{H}$.
Then $\phi_{\mathbf{H}}=g\circ\phi_{\mathbf{M}}$ and hence if $\mathbf{H}\in\mathcal{I}$
then $\mathbf{M}\in\mathcal{I}$.
\end{example}
In general there are $2^{k^{2}}$possible filter matrices in $\mathcal{F}.$
Instead of searching over all possible filter matrices we shall start
with some definite structures of filter matrices which are identifiable.
The above discussion motivates us to extend the identifiability over
a larger class of matrices. This requires some ordering of the filter
matrices in terms of the data we store. 
\begin{defn}
For filter matrix $\mathbf{M}=((m_{ij}))\in\mathcal{F}$ and $\mathbf{H}=((h_{ij}))\in\mathcal{F}$
we say $\mathbf{M}\succeq\mathbf{H}$ if $\forall i,j\quad h_{ij}=1\Rightarrow m_{ij}=1$
and $\mathbf{M}\preceq\mathbf{H}$ if $\forall i,j\quad h_{ij}=0\Rightarrow m_{ij}=0.$
\end{defn}
\medskip{}

\begin{lem}
\label{lemma7}a) If $\mathbf{H}\in\mathcal{I}$ and $\mathbf{M}\succeq\mathbf{H}$
then $\mathbf{M}\in\mathcal{I}.$

\qquad{}\qquad{}b) If $\mathbf{H}\in\mathcal{F}-\mathcal{I}$ and
$\mathbf{M}\preceq\mathbf{H}$ then $\mathbf{M}\in\mathcal{F}-\mathcal{I}.$\end{lem}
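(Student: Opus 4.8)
The plan is to reduce both parts to Lemma \ref{lemma4} by exhibiting, whenever one filter dominates another, an explicit function relating their filtered chains; part (b) will then fall out as the contrapositive of part (a). The structural fact I would use is that a filtered chain $\phi_{\mathbf{F}}(\mathbf{x})$ is a length-$n$ string over the alphabet $\{1,\dots,k\}\cup\{\text{blank}\}$ in which position $t$ displays the true state $x_t$ exactly when $t$ participates in some directly recorded transition, i.e. when $f_{x_{t-1}x_t}=1$ or $f_{x_t x_{t+1}}=1$, and is blank otherwise. The core claim is then: if $\mathbf{M}\succeq\mathbf{H}$, then $\phi_{\mathbf{H}}=g\circ\phi_{\mathbf{M}}$ for a suitable $g$. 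I would take $g$ to be the map that scans the $\mathbf{M}$-filtered string and re-blanks every position that fails the $\mathbf{H}$-filter: a position already blank under $\mathbf{M}$ stays blank, and a position showing state $c$ with visible left neighbour $a$ (resp. visible right neighbour $b$) is kept iff $h_{ac}=1$ (resp. $h_{cb}=1$), and re-blanked otherwise.

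The point that needs care, and the main obstacle, is verifying that $g$ depends only on $\phi_{\mathbf{M}}(\mathbf{x})$ and genuinely reproduces $\phi_{\mathbf{H}}(\mathbf{x})$, since a priori the neighbour needed to test the $\mathbf{H}$-filter at some position might itself be blank under $\mathbf{M}$. This is exactly where the hypothesis $\mathbf{M}\succeq\mathbf{H}$, i.e. $h_{ij}=1\Rightarrow m_{ij}=1$, does the work. If position $t$ ought to be visible under $\mathbf{H}$ as the target of $x_{t-1}\to x_t$, then $h_{x_{t-1}x_t}=1$ forces $m_{x_{t-1}x_t}=1$, so both $t$ and its left neighbour are already visible under $\mathbf{M}$ and the test $h_{x_{t-1}x_t}=1$ can be carried out on $\phi_{\mathbf{M}}(\mathbf{x})$; the source side is symmetric. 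Conversely, if a neighbour is blank under $\mathbf{M}$ then the corresponding $m$-entry is $0$, whence the corresponding $h$-entry is $0$, so that position is correctly left blank. Hence every $\mathbf{H}$-visible position is $\mathbf{M}$-visible with the same state shown, every $\mathbf{M}$-visible but $\mathbf{H}$-invisible position is re-blanked, and $g(\phi_{\mathbf{M}}(\mathbf{x}))=\phi_{\mathbf{H}}(\mathbf{x})$ as required; the endpoints $t=1,n$ and the known initial state are handled by the obvious one-sided version of the rule.

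With the core claim established, part (a) is immediate: $\mathbf{M}\succeq\mathbf{H}$ yields $\phi_{\mathbf{H}}=g\circ\phi_{\mathbf{M}}$, and Lemma \ref{lemma4} then gives $\mathbf{H}\in\mathcal{I}\Rightarrow\mathbf{M}\in\mathcal{I}$. For part (b) I would note that $\mathbf{M}\preceq\mathbf{H}$ is the same relation as $\mathbf{H}\succeq\mathbf{M}$, so statement (b) is precisely the contrapositive of statement (a) with the two matrices interchanged, and therefore needs no separate argument; alternatively, one feeds $\phi_{\mathbf{M}}=g'\circ\phi_{\mathbf{H}}$ (the core claim applied with $\mathbf{H}$ the dominating matrix) into the second half of Lemma \ref{lemma4}.
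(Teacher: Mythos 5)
Your proof is correct and takes essentially the same route as the paper's: both parts are reduced to Lemma \ref{lemma4} via the factorization $\phi_{\mathbf{H}}=g\circ\phi_{\mathbf{M}}$ when $\mathbf{M}\succeq\mathbf{H}$, with part (b) obtained by swapping the roles of the two matrices. The only difference is that you construct $g$ explicitly and verify that it is a function of $\phi_{\mathbf{M}}(\mathbf{x})$ alone (your observation that a blank neighbour under $\mathbf{M}$ forces the corresponding $h$-entry to be $0$ is the right check), whereas the paper merely asserts the existence of such a $g$.
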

\begin{proof}
a) $\mathbf{M}\succeq\mathbf{H}$ implies $\phi_{\mathbf{H}}=g(\phi_{\mathbf{M}})$
for some $g(.).$ Using Lemma \ref{lemma4} we get $\mathbf{H}\in\mathcal{I}$
implies $\mathbf{M}\in\mathcal{I}.$

\qquad{}b) $\mathbf{M}\preceq\mathbf{B}$ implies $\phi_{\mathbf{M}}=g(\phi_{\mathbf{H}})$
for some $g(.).$ Using Lemma \ref{lemma4} we get $\mathbf{H}\in\mathcal{F}-\mathcal{I}$
implies $\mathbf{M}\in\mathcal{F}-\mathcal{I}$.
\end{proof}
Thus if any filter matrix $\mathbf{M}$ is identifiable, then all
filter matrices which stores more data than $\mathbf{M}$ are also
identifiable. This fact is also true for any subclass of filter matrices. 
\begin{defn}
If $\mathcal{D}\subseteq\mathcal{F}$, then the closure of $\mathcal{D}$
is defined as 
\[
\bar{\mathcal{D}}=\{\mathbf{F}\in\mathcal{F}\;:\mathbf{F}\succeq\mathbf{D}\;\textrm{for some }\mathbf{D}\in\mathcal{D}\}.
\]
\end{defn}
\begin{lem}
\label{lemma9}If $\mathcal{D}\subseteq\mathcal{I}$ then $\mathcal{\bar{D}}\subseteq\mathcal{I}.$\end{lem}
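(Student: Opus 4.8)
The plan is to reduce the statement about the closure $\bar{\mathcal{D}}$ to the already-established fact in Lemma \ref{lemma7}(a), which handles a single matrix at a time. Recall that $\bar{\mathcal{D}}$ is defined as the collection of all filter matrices $\mathbf{F}\in\mathcal{F}$ such that $\mathbf{F}\succeq\mathbf{D}$ for \emph{some} $\mathbf{D}\in\mathcal{D}$. So the entire content of the lemma is: if every matrix in $\mathcal{D}$ is identifiable, then every matrix dominating some element of $\mathcal{D}$ is also identifiable. This is essentially an application of the monotonicity result to each witness.

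First I would take an arbitrary $\mathbf{F}\in\bar{\mathcal{D}}$; my goal is to show $\mathbf{F}\in\mathcal{I}$. By the definition of closure, there exists at least one $\mathbf{D}\in\mathcal{D}$ with $\mathbf{F}\succeq\mathbf{D}$. Next, since we are given $\mathcal{D}\subseteq\mathcal{I}$, this particular $\mathbf{D}$ satisfies $\mathbf{D}\in\mathcal{I}$, i.e.\ $\mathbf{D}$ is an identifiable filter matrix. Now I am precisely in the hypothesis of Lemma \ref{lemma7}(a), with $\mathbf{H}=\mathbf{D}$ (identifiable) and $\mathbf{M}=\mathbf{F}$ (dominating $\mathbf{D}$). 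Applying that lemma gives $\mathbf{F}\in\mathcal{I}$ directly.

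Since $\mathbf{F}$ was an arbitrary element of $\bar{\mathcal{D}}$, this establishes $\bar{\mathcal{D}}\subseteq\mathcal{I}$, completing the argument. I do not anticipate any genuine obstacle here: the lemma is really a packaging of Lemma \ref{lemma7}(a) from individual matrices up to the closure operation, and the only subtlety worth stating explicitly is that the existential quantifier in the definition of $\bar{\mathcal{D}}$ is exactly what supplies the single dominated matrix $\mathbf{D}$ needed to invoke the monotonicity result. No properties of the EM construction, the filtering map $\phi$, or the Markov structure beyond what is encoded in Lemma \ref{lemma7} are required.
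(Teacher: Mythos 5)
Your proposal is correct and follows exactly the same route as the paper: take an arbitrary element of $\bar{\mathcal{D}}$, extract the witness $\mathbf{D}\in\mathcal{D}$ from the definition of closure, use $\mathcal{D}\subseteq\mathcal{I}$ to conclude $\mathbf{D}\in\mathcal{I}$, and apply Lemma \ref{lemma7}(a). Nothing to add.
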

\begin{proof}
Let $\mathbf{M}\in\mathcal{\bar{D}}$. Then $\mathbf{M}\succeq\mathbf{D}\;\textrm{for some }\mathbf{D}\in\mathcal{D}$. 

Since $\mathcal{D}\subseteq\mathcal{I}$, we get $\mathbf{D}\in\mathcal{I}$.
Then Lemma \ref{lemma7} implies $\mathbf{M}\in\mathcal{I}$. 

This implies $\mathcal{\bar{D}}\subseteq\mathcal{I}.$
\end{proof}
Thus given any class of identifiable filter matrices $\mathcal{D}$
we can always extend it to a larger subclass of identifiable filter
matrices.

Our observed chain is a sequence of states and blanks $(\SP)$. Given
any observed chain we want to find the condition under which the conditional
probability of a given segment of the observed chain given the initial
state in the segment is identifiable. 
\begin{defn}
For any finite sequence $\pi$ of states or blanks $(\SP)$ we define
\[
S_{\pi}=\textrm{set of all filtered segments where }\pi\textrm{ occurs in consecutive positions}.
\]

We note that if $\pi_{1}\subseteq\pi_{2}$ then $S_{\pi_{1}}\supseteq S_{\pi_{2}}$. \end{defn}
\begin{lem}
\label{lemma11}For any filter matrix $\mathbf{F}$, if $P(S_{\pi})>0$
then $p_{\pi}$ is identifiable where $\pi$ is a sequence of states
or blanks which starts and ends with states and $p_{\pi}$ is the
conditional probability of the sequence $\pi$ given the initial state
in $\pi$.\end{lem}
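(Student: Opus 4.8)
The plan is to show that $p_{\pi}$, although attached to the non-Markov filtered chain, is in fact a functional of the distribution of the observable $\phi_{\mathbf{F}}(\mathbf{X})$ alone; once this is established, any two parameter points that induce the same filtered law must assign $p_{\pi}$ the same value, which is exactly the identifiability asserted.

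First I would fix the initial state $a_{0}$ of $\pi$ and view it as a one-symbol pattern. Since $a_{0}$ sits as a prefix of $\pi$, the sub-pattern remark following the definition of $S_{\pi}$ gives $S_{\pi}\subseteq S_{a_{0}}$, so the hypothesis $P(S_{\pi})>0$ forces $P(S_{a_{0}})\geq P(S_{\pi})>0$. Reading $P(S_{\pi})$ and $P(S_{a_{0}})$ as the stationary probabilities that $\pi$ and $a_{0}$ begin at a generic position of the filtered chain, I would then write
\[
p_{\pi}=\frac{P(S_{\pi})}{P(S_{a_{0}})},
\]
the positivity of the denominator guaranteeing that the ratio is well defined. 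Because every occurrence of $\pi$ begins at a position displaying $a_{0}$, this quotient is exactly the conditional probability of $\pi$ given its initial state. Both numerator and denominator are probabilities of events phrased purely in terms of the states and blanks emitted by $\phi_{\mathbf{F}}(\mathbf{X})$, hence functionals of the filtered law, and so is $p_{\pi}$.

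To secure recoverability of $p_{\pi}$ from a single realization, I would lean on ergodicity: since $0<p_{ij}<1$ for all $i,j$, the underlying chain is irreducible and aperiodic, hence ergodic. The event that $\pi$ is displayed starting at position $n$ depends only on the underlying states at the sites spanned by $\pi$ together with one neighbouring site on each side, because the emitted symbol at any site is fixed by that site and its two immediate neighbours; it is therefore a cylinder event in the underlying chain, whose empirical frequency converges to its stationary probability. Dividing the limiting frequency of $\pi$ by that of $a_{0}$ then reproduces $p_{\pi}$ from the data, so distinct values of $p_{\pi}$ cannot arise from the same filtered law.

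The step I expect to be the main obstacle is making the passage $p_{\pi}=P(S_{\pi})/P(S_{a_{0}})$ fully rigorous for a chain that is no longer Markov after filtering. This forces me to pin the occurrence of $\pi$ to a finite window of the underlying Markov chain and to account for the fact that a symbol is blank exactly when neither its incoming nor its outgoing transition is recorded, so the window must extend one step beyond $\pi$ at each end to resolve the boundary symbols. The hypothesis that $\pi$ starts and ends with genuine states is what anchors these windows at both ends and makes the occurrences of $\pi$ unambiguously countable; verifying that this anchoring suffices to identify the ratio with the intended conditional probability is the delicate part of the argument.
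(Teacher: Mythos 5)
Your overall strategy---realize $p_{\pi}$ as a ratio of probabilities of two events defined purely in terms of the filtered chain, so that $p_{\pi}$ becomes a functional of the law of $\phi_{\mathbf{F}}(\mathbf{X})$---is the same as the paper's, which compares the class of observed subchains ending with the pattern $\pi$ against the class ending with its initial state. But your load-bearing identity $p_{\pi}=P(S_{\pi})/P(S_{a_{0}})$ is false in general, for exactly the boundary reason you flag at the end without resolving. Whether $a_{0}$ is displayed at position $n$ depends not only on $X_{n}=a_{0}$ but also on $X_{n-1}$ and $X_{n+1}$ (a symbol is shown when its incoming or its outgoing transition is recorded), and the dependence on $X_{n+1}$ is entangled with the very continuation that $\pi$ prescribes, so dividing by the marginal display probability of $a_{0}$ does not reproduce the Markov conditional probability. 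Concretely, take $k=2$, the filter with $f_{11}=1$ and all other entries $0$, and $\pi=11$, so that $p_{\pi}=p_{11}$. Taking the chain stationary with stationary distribution $\mu$ for simplicity, $P(S_{11})=\mu_{1}p_{11}$, while $P(S_{1})=\mu_{1}-\mu_{2}p_{21}p_{12}=\mu_{1}(1-p_{12}^{2})=\mu_{1}p_{11}(1+p_{12})$, so your ratio equals $1/(1+p_{12})=1/(2-p_{11})$, which differs from $p_{11}$ for every $p_{11}\in(0,1)$. The denominator over-counts: it includes positions where $1$ is displayed only because of what follows it.

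The repair is to condition not on the bare marginal event that $a_{0}$ is displayed but on an event measurable with respect to $(X_{1},\dots,X_{n})$ alone---for instance, that the observed chain up to position $n$ ends with a directly recorded transition into $a_{0}$---so that the Markov property at time $n$ factors the future cleanly; one must similarly ensure that the display of the last symbol of $\pi$ is anchored by a transition recorded inside $\pi$, so that it does not depend on the state one step beyond the pattern. This is in effect what the paper's proof does by conditioning on the whole observed subchain ending in $\alpha$ rather than on the single-symbol event (the paper is itself terse on this point, but its conditioning event carries the information about how $\alpha$ came to be displayed, which yours discards). Your ergodicity paragraph is beside the point: identifiability as defined in the paper is a statement about the observed likelihood, i.e.\ about the distribution of $\phi_{\mathbf{F}}(\mathbf{X})$, not about almost-sure recovery from one realization, so the only genuine issue is the incorrect ratio.
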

\begin{proof}
Let $\pi$ start with the state $\alpha$ and end with the state $\beta$. 

Let 

\[
\mathcal{A}=\textrm{subchains that ends with }\alpha.
\]
\[
\mathcal{B}=\textrm{subchains that ends with the sequence }\pi.
\]
Then $\mathcal{B}\subseteq\mathcal{A}.$ Also $P(S_{\pi})>0$ implies
$P(\mathcal{B})>0$ which implies $P(\mathcal{A})>0$. 

Also from Markov property we get that $P(\mathcal{B}|\mathcal{A})=p_{\pi}.$
Thus if $p_{\pi}$ changes $P(\mathcal{B}|\mathcal{A})$ changes.
Since the conditional probability of a class of subchains changes,
the joint distribution of the entire filtered chain must also change.
Hence two distinct values of $p_{\pi}$ will give two distinct values
of the observed likelihood.Thus $p_{\pi}$ is identifiable. \end{proof}
\begin{cor}
For any filter matrix $F$ the parameter $p_{ij}$ is identifiable
if $P(S_{ij})>0.$
\end{cor}
As mentioned before we want to start with filter matrices of definite
structures which are identifiable and extend them to relatively larger
classes. With this view in mind, we define three classes of filter
matrices each of which will be sufficient for a filter matrix to be
identifiable. 

\renewcommand{\labelenumi}{\alph{enumi})}
\begin{description}
\item [{Class1}] We define $\mathbb{C}_{1}\subseteq\mathcal{F}$ which
consists of all filter matrix $\mathbf{F}=((f_{ij}))\:,\:1\leq i,j\leq k$
such that \end{description}
\begin{enumerate}
\item $\exists$ $\alpha$ such that $f_{\alpha j}=0,j=1,2,...,k$ i.e.
the $\alpha^{th}$ row of $\mathbf{F}$ is zero. 
\item $\exists$ $\beta$ such that $f_{i\beta}=0,i=1,2,...,k$ i.e. the
$\beta^{th}$column of $\mathbf{F}$ is zero.
\item $f_{pj}=1\textrm{ for exactly one }j,\:1\leq j\leq k,\qquad p=1,2,...,k\,,\,p\neq\alpha$
,i.e. except $\alpha^{th}$ row every other row must have exactly
one element $1$.
\item $f_{iq}=1\textrm{ for exactly one }i,\:1\leq i\leq k,\qquad q=1,2,...,k\,,\,q\neq\beta$
,i.e. except $\beta^{th}$ column every other column must have exactly
one element $1$.
\end{enumerate}
\bigskip{}

\begin{description}
\item [{Class2}] We define $\mathbb{C}_{2}\subseteq\mathcal{F}$ which
consists of all filter matrix $\mathbf{F}=((f_{ij}))\:,\:1\leq i,j\leq k$
such that\end{description}
\begin{enumerate}
\item $\exists$ $\alpha$ and $\beta$ such that $f_{i\alpha}=0,i=1,2,...,k$
and $f_{i\beta}=0,i=1,2,...,k$ i.e. the $\alpha^{th}$ and $\beta^{th}$
column of $F$ is zero.
\item $f_{iq}=1\textrm{ for at exactly one }i,\:1\leq i\leq k,\qquad q=1,2,...,k\,,\,q\neq\alpha,\beta$
,i.e. except $\alpha^{th}$ and $\beta^{th}$ column every other column
have exactly one element $1$.
\item $f_{\alpha j}=f_{\beta j}=1\quad1\leq j\leq k,\qquad,\,j\neq\alpha,\beta$,
i.e. except $\alpha^{th}$ and $\beta^{th}$ column every other element
of $\alpha^{th}$ and $\beta^{th}$ row is $1$.
\item $f_{pj}=1\textrm{ for exactly one }j,\:1\leq j\leq k,\qquad p=1,2,...,k,p\neq\alpha,\beta$
,i.e. except $\alpha^{th}$ and $\beta^{th}$ row every other row
have exactly one element $1$.
\end{enumerate}
\medskip{}

\begin{description}
\item [{Class3}] We define $\mathbb{C}_{3}\subseteq\mathcal{F}$ which
consists of all filter matrix $\mathbf{F}=((f_{ij}))\:,\:1\leq i,j\leq k$
such that\end{description}
\begin{enumerate}
\item $\exists$ $\alpha$ and $\beta$ such that $f_{\alpha i}=0,i=1,2,...,k$
and $f_{\beta i}=0,i=1,2,...,k$ i.e. the $\alpha^{th}$ and the $\beta^{th}$
row of $\mathbf{F}$ is zero. 
\item $f_{qi}=1\textrm{ for exactly one }i,\:1\leq i\leq k,\qquad q=1,2,...,k\,,\,q\neq\alpha,\beta$
,i.e. except $\alpha^{th}$ and $\beta^{th}$ row every other row
have exactly one element $1$.
\item $f_{j\alpha}=f_{j\beta}=1\quad1\leq j\leq k,\qquad,\,j\neq\alpha,\beta$,
i.e. except $\alpha^{th}$ and $\beta^{th}$ row every other element
of $\alpha^{th}$ and $\beta^{th}$ column is $1$.
\item $f_{jp}=1\textrm{ for exactly one }j,\:1\leq j\leq k,\qquad p=1,2,...,k,\,p\neq\alpha,\beta$
,i.e. except $\alpha^{th}$ and $\beta^{th}$ column every other column
have exactly one element $1$.
\end{enumerate}
The following theorem and its corollary provide sufficient conditions
for filter matrices to be identifiable. Any filter matrix which belong
to at least one of the three classes is identifiable. The proof of
the theorem is given in the appendix. 
\begin{thm}
\label{Theo13}Consider a simple markov chain $\mathbf{X}$ on finite
state space $\{1,2,...,k\}$ and transition probabilities $p_{ij}$
where $0<p_{ij}<1,i,j=1,2,...k.$ Suppose $\mathbf{F}$ be any filter
matrix belonging to the class $\mathbb{C}_{*}=\mathbb{C}_{1}\cup\mathbb{C}_{2}\cup\mathbb{C}_{3}$.
Then $\mathbf{F}$ must also belong to the class $\mathcal{I}.$
\end{thm}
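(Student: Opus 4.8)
The plan is to fix a filter matrix $\mathbf{F}\in\mathbb{C}_{*}$ and two parameter vectors $\boldsymbol{\theta},\boldsymbol{\theta}'$ inducing the same distribution on the filtered chain $\phi_{\mathbf{F}}(\mathbf{X})$, and to deduce $\boldsymbol{\theta}=\boldsymbol{\theta}'$. The engine is Lemma~\ref{lemma11}: whenever a pattern $\pi$ of states and blanks (beginning and ending with states) has $P(S_\pi)>0$, the conditional path-probability $p_\pi$ is identifiable, so equality of the filtered distributions forces $p_\pi(\boldsymbol{\theta})=p_\pi(\boldsymbol{\theta}')$. Because $0<p_{ij}<1$ everywhere, every structurally admissible pattern has positive probability, so I may use any $\pi$ consistent with the recording rule that state $X_t$ is displayed iff $f_{X_{t-1}X_t}=1$ or $f_{X_tX_{t+1}}=1$. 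Thus the goal reduces to showing that enough such equalities, together with the stochasticity constraints $\sum_j p_{ij}=1$, pin down every $p_{ij}$. I would treat the three classes separately and, within each, recover parameters in tiers.

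In the first tier I use the corollary to Lemma~\ref{lemma11}: $p_{ij}$ is identifiable as soon as $i\to j$ can appear with both endpoints displayed and adjacent, i.e. when $f_{ij}=1$, or when $f_{ij}=0$ but there exist $a,b$ with $f_{ai}=1$ and $f_{jb}=1$. I would tabulate, for each class, the pairs this covers: for $\mathbb{C}_1$ it is every $(i,j)$ with $i\ne\beta$ and $j\ne\alpha$; for $\mathbb{C}_2$ it is every row $i\notin\{\alpha,\beta\}$; and the transpose-type count for $\mathbb{C}_3$. In the second tier any row in which all but one entry is already identified is completed by $p_{i\ell}=1-\sum_{j\ne\ell}p_{ij}$; for $\mathbb{C}_1$ this finishes every row $i\ne\beta$, leaving only the single exceptional row $\beta$.

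The third tier, an interior-blank argument, is where the real work lies, and I would present $\mathbb{C}_1$ as the clean template. The state $\beta$ lies under a zero column, so it is never displayed as a destination, and its row has the single recorded exit $\beta\to\sigma(\beta)$, so $\beta$ appears only in the block $\beta\,\sigma(\beta)$; hence no transition $\beta\to j$ with $j\ne\sigma(\beta)$ is ever fully visible and these parameters escape the first two tiers. To recover them I would use the length-three pattern $a\,\SP\,j$ coming from an underlying segment $w\to a\to\beta\to j\to\sigma(j)$, where $f_{wa}=1$ makes $a$ displayed, $\beta$ is blank because it exits to $j\ne\sigma(\beta)$, and $j$ is displayed through its recorded exit. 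By the Markov property this pattern factors as $p_{a\,\SP\,j}=\big(\sum_{m:\,f_{am}=0,\,f_{mj}=0}p_{am}p_{mj}\big)\,p_{j\sigma(j)}$, in which $p_{j\sigma(j)}$ and every summand except $m=\beta$ have already been identified (they lie in the completed rows $a$ and $m\ne\beta$); since $p_{a\beta}>0$ is known, the single unknown $p_{\beta j}$ is isolated, hence identifiable. Sweeping $j$ and closing row $\beta$ by stochasticity completes $\mathbb{C}_1$.

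For $\mathbb{C}_2$ and $\mathbb{C}_3$ the same philosophy applies, but the two exceptional rows (resp. columns) are coupled, and this is the step I expect to be the main obstacle. Writing $Q$ for the unknown $2\times2$ sub-block of transitions among the two special states and $u_c$ for the already identified vector of probabilities of entering that block from a normal state $c$, the admissible patterns are exactly the maximal runs through the two special states, whose probabilities reveal the orbit vectors $u_cQ^{m}$ for all $m\ge0$ and all normal $c$, while the row sums of $Q$ are known from the second tier. One must then argue that these data determine $Q$ uniquely for \emph{every} parameter value in $\boldsymbol{\Theta}$: generically two normal anchors $c,c'$ give linearly independent $u_c,u_{c'}$ and $Q$ is read off at once, but the delicate case is when all $u_c$ are proportional, where I would fall back on the higher orbit terms $u_cQ,u_cQ^2$ together with the known row sums of $Q$ to force uniqueness. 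Establishing this non-degeneracy uniformly over $\boldsymbol{\Theta}$ is the crux; once it is in hand, the first two tiers give $\boldsymbol{\theta}=\boldsymbol{\theta}'$, so $\mathbf{F}\in\mathcal{I}$, and since $\mathbf{F}\in\mathbb{C}_{*}$ was arbitrary the theorem follows.
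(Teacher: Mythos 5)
Your treatment of $\mathbb{C}_1$ is sound and in places tidier than the paper's: your tier-one sweep (pattern $a\,i\,j\,b$ with $f_{ai}=f_{jb}=1$) collapses the paper's Steps 1--3 of Part 1 into a single case, your tier-two row-sum completion replaces the paper's more laborious Step 6 pattern argument for the column-$\alpha$ entries, and your tier-three pattern $a\:\_\:j$, in which the only unidentified summand is $p_{a\beta}p_{\beta j}$ with known positive coefficient $p_{a\beta}$, is exactly the paper's Step 5. (You should still say a word about the case $\alpha=\beta$, where row $\beta$ is the zero row and has no recorded exit $\sigma(\beta)$; your tier-three template in fact covers it verbatim, as in the paper's Case b.)

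The genuine gap is where you yourself locate it: the coupled $2\times2$ block $Q$ of transitions between the two special states in $\mathbb{C}_2$ and $\mathbb{C}_3$. The data you extract are the two row sums of $Q$ together with the orbit values $u_cQ^m$, and you propose to handle the degenerate case (all entry vectors $u_c$ proportional to a single $u$) by passing to higher orbit terms. That fallback does not close the case: writing $Q-Q'=\left(\begin{smallmatrix}\delta&-\delta\\ \epsilon&-\epsilon\end{smallmatrix}\right)$ for two candidates sharing the known row sums, one checks that $uQ^m=u(Q')^m$ for all $m$ precisely when both $u$ and $uQ$ are orthogonal to $(\delta,\epsilon)$, i.e.\ whenever $u$ is a left eigenvector of $Q$; in that situation every $uQ^m$ is a scalar multiple of $u$ and the higher orbit terms contribute nothing beyond $m=1$, leaving a one-parameter family of indistinguishable blocks. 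Since parameter points with all $u_c$ proportional (for instance $p_{c\alpha}=p_{c\beta}$ for every ordinary state $c$) do lie in $\boldsymbol{\Theta}$, the non-degeneracy you defer cannot be ``established uniformly over $\boldsymbol{\Theta}$''; some further observable is needed to separate $p_{\alpha\alpha}$ from $p_{\beta\alpha}$ there, and your argument supplies none. So the proposal is not a complete proof for $\mathbb{C}_2$ and $\mathbb{C}_3$. (For what it is worth, the paper's own proof is vulnerable at the very same point: Step 2 of Part 2 closes the system with the relation $p_{\alpha\alpha}+p_{\beta\alpha}=K_2$, attributed to $\sum_jp_{ij}=1$, but row stochasticity constrains the pair $p_{\alpha\alpha},p_{\alpha\beta}$, not a column pair, so the two unknowns remain tied by a single linear equation unless one can produce two anchors $c,c'$ with non-proportional $u_c,u_{c'}$.)
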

The following corollary to the above theorem is an immediate application
of Lemma \ref{lemma9}. 
\begin{cor}
$\overline{\mathbb{C}_{*}}\subseteq\mathcal{I}$.
\end{cor}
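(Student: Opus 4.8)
The plan is to obtain the corollary as a direct instantiation of Lemma \ref{lemma9}, using Theorem \ref{Theo13} to supply its single hypothesis. First I would observe that $\mathbb{C}_{*}=\mathbb{C}_{1}\cup\mathbb{C}_{2}\cup\mathbb{C}_{3}$ is by construction a subset of $\mathcal{F}$, so that its closure $\overline{\mathbb{C}_{*}}$ is well defined by the preceding definition, namely $\overline{\mathbb{C}_{*}}=\{\mathbf{F}\in\mathcal{F}:\mathbf{F}\succeq\mathbf{D}\textrm{ for some }\mathbf{D}\in\mathbb{C}_{*}\}$.

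The key step is to verify that $\mathbb{C}_{*}$ is eligible as the set $\mathcal{D}$ appearing in Lemma \ref{lemma9}; that is, I need the inclusion $\mathbb{C}_{*}\subseteq\mathcal{I}$. But this is precisely the content of Theorem \ref{Theo13}, which asserts that every filter matrix belonging to at least one of the three classes lies in $\mathcal{I}$. Having secured this inclusion, I would set $\mathcal{D}=\mathbb{C}_{*}$ and invoke Lemma \ref{lemma9}, which yields $\overline{\mathbb{C}_{*}}=\overline{\mathcal{D}}\subseteq\mathcal{I}$, completing the argument in a single line.

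It is worth spelling out what the conclusion says in concrete terms, since that is the real payoff. A matrix $\mathbf{F}$ lies in $\overline{\mathbb{C}_{*}}$ exactly when it records at least every transition recorded by some matrix $\mathbf{D}$ in one of the three sufficient classes. The corollary therefore asserts that storing strictly more data than any class-$\mathbb{C}_{1}$, $\mathbb{C}_{2}$, or $\mathbb{C}_{3}$ filter never destroys identifiability---a monotonicity that ultimately traces back, through the proof of Lemma \ref{lemma9}, to the domination statement in Lemma \ref{lemma7}(a).

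Because the substantive work has already been carried out upstream---in Theorem \ref{Theo13}, which establishes identifiability of the three base classes, and in Lemma \ref{lemma9}, which establishes stability of $\mathcal{I}$ under the closure operation---there is essentially no obstacle remaining at this stage. The only point demanding any care is purely a matter of bookkeeping: confirming that the closure and the order $\succeq$ are applied to the full union $\mathbb{C}_{*}$ rather than to a proper subclass, so that the hypothesis $\mathcal{D}\subseteq\mathcal{I}$ of Lemma \ref{lemma9} is met with $\mathcal{D}=\mathbb{C}_{*}$. Once that is checked, the corollary follows at once.
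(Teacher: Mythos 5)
Your proposal is correct and matches the paper's own argument exactly: the paper also obtains the corollary as an immediate application of Lemma \ref{lemma9} with $\mathcal{D}=\mathbb{C}_{*}$, the hypothesis $\mathbb{C}_{*}\subseteq\mathcal{I}$ being supplied by Theorem \ref{Theo13}. Nothing is missing.
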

Thus if we start with a definite structure of matrices in $\mathbb{C}_{1}$
or $\mathbb{C}_{2}$ or $\mathbb{C}_{3}$ we get a relatively larger
class $\overline{\mathbb{C}_{*}}$ of identifiable filter matrices.
For the rest of the paper we shall be working with filter matrices
within this class. We shall find that any filter matrix in this class
will provide considerable reduction in data.

\section{Structural zeroes in Transition probability matrix}

In the previous section while obtaining the sufficient conditions
for identifiability we assumed $0<p_{ij}<1,\,\forall\,i,j.$ This
was a crucial assumption in developing the theory for the sufficient
conditions. However in many practical applications this assumption
stands out to be too restrictive. For example, while modeling a disease
status the probability of an individual entering from one state to
another may be zero (in case of chronic illness, the condition of
an individual usually deteriorates). Also the case of structural zeroes
in the transition probability matrix will occur later in this paper
while dealing with multiple markov chains. In this section we generalize
the sufficient conditions for a filter matrix to be identifiable even
when some $p_{ij}$'s are zero. 

We note that all zeroes (if any) in the transition probability model
are structural zeroes, that is, we know the position of the zeroes
even before the collection of the data. Also for any $i,\:p_{ij}$
must be positive for at least one $j$ since all the row sums of the
transition probability matrix is $1.$ We further assume 

\begin{center}
(\textbf{A1}) for any $j,\:p_{ij}$ must be positive for at least
one $i.$ 
\par\end{center}

This is a reasonable assumption to make because if such a state $j$
exists we shall ignore that state from our analysis. 

As before we have the classes of filter matrix $\mathbb{C}_{1}$,
$\mathbb{C}_{2}$ and $\mathbb{C}_{3}.$ Further let us define an
additional class of filter matrix $\mathfrak{R}\subseteq\mathcal{F}$
as 

\[
\mathfrak{R}=\{\mathbf{F}\in\mathcal{F}:\textrm{ For any }i\in\{1,2,...,n\},f_{ij}=1\textrm{ for at least one }j\in Z\}
\]
where $Z=\{j:p_{ij}>0\}.$ This restriction means for every row of
a filter matrix, we should observe at least one probable transition.
The restriction on the filter matrices is quite justified and does
not in any way reduces the applicability of filtering mechanism. The
following theorem is a generalization of Theorem 13 in the case where
we allow some $p_{ij}$ to be zero. 
\begin{thm}
Consider a simple markov chain $\mathbf{X}$ on finite state space
$\{1,2,...,k\}$ and transition probabilities $p_{ij}$ where $0\leq p_{ij}\leq1,i,j=1,2,...k.$
Let $\mathbf{F}$ be any filter matrix belonging to the class $\overline{\mathcal{S}}$
where $\mathcal{S}=\mathbb{C}_{*}\cap\mathfrak{R}.$ Then under the
assumption $\textrm{A1},$ $\mathbf{F}$ must also belong to the class
$\mathcal{I}.$
\end{thm}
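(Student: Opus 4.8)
The plan is to reduce to the un-closed class and then re-use the pattern constructions of Theorem \ref{Theo13}, checking only that structural zeroes do not destroy their positivity. First I would observe that it suffices to prove $\mathcal{S}\subseteq\mathcal{I}$ for $\mathcal{S}=\mathbb{C}_{*}\cap\mathfrak{R}$: Lemma \ref{lemma9} applied with $\mathcal{D}=\mathcal{S}$ then yields $\overline{\mathcal{S}}\subseteq\mathcal{I}$, which is exactly the assertion. So fix $\mathbf{F}\in\mathcal{S}$. Because every zero of $P$ is structural, and hence known before observation, the only quantities that must be recovered from the filtered chain are the parameters $p_{ij}$ with $p_{ij}>0$; the zero entries are not in play. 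By Lemma \ref{lemma11} (and its corollary) any conditional pattern probability $p_{\pi}$ with $P(S_{\pi})>0$ is identifiable, so the whole problem reduces to exhibiting, for each positive $p_{ij}$, an observable pattern $\pi$ with positive filtered probability from which $p_{ij}$ can be read off.

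The heart of the argument is that the pattern constructions already devised for $\mathbb{C}_{1},\mathbb{C}_{2},\mathbb{C}_{3}$ in the proof of Theorem \ref{Theo13} can be reused; only the verification ``$P(S_{\pi})>0$'' has to be redone under the weaker hypotheses. Recall that a state is shown in the filtered chain only through a directly recorded transition incident to it, so $i\to j$ appears as the consecutive visible pair $(i,j)$ exactly when $f_{ij}=1$, or when there are anchors $a,b$ with $f_{ai}=1$ and $f_{jb}=1$ revealing $i$ on the left and $j$ on the right; transitions out of the distinguished all-zero-column state (dually, into the all-zero-row state) are instead recovered through patterns that contain a blank together with a row-sum relation. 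In Theorem \ref{Theo13} the blanket positivity $0<p_{ij}<1$ made every transition occurring in such a pattern automatically positive. I would therefore isolate the three places positivity was used: (i) the target transition $i\to j$ itself, (ii) the anchor transitions $a\to i$ and $j\to b$, and (iii) the connecting transitions on a realization reaching the segment from the known initial state.

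The main obstacle is precisely (ii) and (iii): with structural zeroes a recorded entry $f_{ab}=1$ may have $p_{ab}=0$, so it never actually occurs and cannot anchor anything, and a state might not be reachable along the particular route used before. This is exactly what the two extra hypotheses repair. For (i) there is nothing to do, since the argument is run only for positive $i\to j$. For (ii), the defining property of $\mathfrak{R}$ --- every row $i$ contains a recorded transition $f_{ij'}=1$ with $p_{ij'}>0$ --- guarantees that each state possesses at least one realizable recorded out-transition, and choosing all anchors among these surviving positive transitions keeps every anchor path positive. For (iii), assumption A1 guarantees that every state is entered with positive probability from some predecessor, which, chained back to the known initial state, produces a positive-probability realization that actually reaches $i$. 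Splicing such a reaching path, the positive transition $i\to j$, and a positive anchor revealing $j$ then gives a genuine realization exhibiting $\pi$, whence $P(S_{\pi})>0$. For the blank-patterns handling the boundary transitions, the same two hypotheses make the flanking visible states realizable and, by the class structure, force the hidden state to be unique, so the pattern probability collapses to a single product $p_{u\beta}\,p_{\beta j}$ whose other factor is already identified; the row-sum relation $\sum_{j'}p_{ij'}=1$ then closes the last coordinate. The delicate verification, to be carried out separately for $\mathbb{C}_{1}$, $\mathbb{C}_{2}$ and $\mathbb{C}_{3}$, is that this uniqueness and realizability always hold, and this is the precise point at which ``observe at least one probable transition in every row'' ($\mathfrak{R}$) and the reachability supplied by A1 are indispensable.

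Once every free parameter $p_{ij}>0$ has been identified, any two distinct parameter vectors in the structurally constrained parameter space must differ in one such coordinate and therefore produce different observed likelihoods, so $\mathbf{F}\in\mathcal{I}$. As $\mathbf{F}\in\mathcal{S}$ was arbitrary, $\mathcal{S}\subseteq\mathcal{I}$, and Lemma \ref{lemma9} lifts this to $\overline{\mathcal{S}}\subseteq\mathcal{I}$, which is the theorem.
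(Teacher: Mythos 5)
Your proposal follows exactly the route the paper takes: reduce via Lemma \ref{lemma9} to showing $\mathcal{S}=\mathbb{C}_{*}\cap\mathfrak{R}\subseteq\mathcal{I}$, then rerun the pattern constructions from the proof of Theorem \ref{Theo13} and check that assumption A1 together with membership in $\mathfrak{R}$ still forces $P(S_{\pi})>0$ for every sequence $\pi$ used there. In fact you supply more detail than the paper, whose entire argument is the two-sentence remark that the proof is ``similar to Theorem \ref{Theo13}'' because $P(S_{\pi})>0$ for the required $\pi$; your itemization of where positivity enters (target transition, anchors, reachability) is a faithful and somewhat fuller version of the same idea.
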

The proof of the above theorem is similar to the proof of Theorem
\ref{Theo13} because under the assumption A1, and for filter matrices
within the class $\mathcal{S}$, we have $P(S_{\pi})>0$ for all choices
of sequences $\pi,$ that we require in Theorem \ref{Theo13}. Finally
application of lemma \ref{lemma9}, gives the required result.

\section{Estimation and testing}

As mentioned earlier, a markov process can be completely characterized
by specifying the transition probability matrix. This section deals
with drawing inferences regarding the parameters. Instead of recording
the entire markov chain $\mathbf{x}$, we apply a given filter matrix
$\mathbf{F}\in\mathcal{F}$ to record $\phi_{\mathbf{F}}(\mathbf{x}).$
$\mathbf{F}$ is fixed and does not in any way depend on the data
$\mathbf{x}.$ The choice of $\mathbf{F}$ may depend on the availability
of the samples, storage facilities or past experience subject to the
constraint of identifiability. Based on $\phi_{\mathbf{F}}(\mathbf{x})$
we shall find estimates of the transition probabilities and compute
the standard error of the estimates. Our main tool for estimation
will be EM algorithm. For the computation of the standard error we
shall use Supplemented EM algorithm(SEM). The latter half of the section
deals with testing of hypothesis regarding the parameters.

\subsection{Estimation of parameters:}

In the present situation the complete data is $\mathbf{x}$ which
is unobserved and the observed data is $\phi_{\mathbf{F}}(\mathbf{x}).$
As a natural tool of missing data analysis we will apply EM algorithm
for the estimation of the parameter $\boldsymbol{\theta}.$ Each iteration
of EM algorithm consists of a E step (expectation step) and an M step
(maximization step). In the E step of the algorithm we need to find
the conditional expectation of the complete data log-likelihood given
the observed data and the current iterated value of the parameters.
In our case this requires to find the conditional expectation with
respect to the conditional distribution of $\mathbf{x}$ given $\phi_{\mathbf{F}}(\mathbf{x})$
and the current iterated value $\boldsymbol{\theta}^{(t)}$ of the
parameter. The complete data log likelihood is 
\[
\ell_{com}(\boldsymbol{\theta})\propto\sum_{\alpha,\beta=1}^{k}n_{\alpha\beta}\log p_{\alpha\beta}.
\]
Since $\ell_{com}(\boldsymbol{\theta})$ is linear in $n_{\alpha\beta}$,
we need to compute 
\[
E\Big(n_{\alpha\beta}|\phi_{\mathbf{F}}(\mathbf{x}),\boldsymbol{\theta}^{(t)}\Big).
\]
This conditional distribution cannot be computed directly as the conditional
distribution of $\mathbf{x}$ given $\phi_{\mathbf{F}}(\mathbf{x})$
cannot be found out explicitly. We shall express $n_{\alpha\beta}$
as a sum of certain indicator variables to evaluate this conditional
expectation, the computation of which will be shown in subsection
5.1.2. We will show that this require us to find the following conditional
probability: 
\[
P\Big(\textrm{the chain moves from state }\alpha\textrm{ to state }\beta|\phi_{\mathbf{F}}(\mathbf{x}),\boldsymbol{\theta}^{(t)}\Big).
\]
Since the observed chain has runs of missing states, the calculation
of the above probability will require us to find the probability of
a transition from one state to another in any number of steps such
that all the intermediate steps are missing. If the complete chain
is available, then the probability of a transition from $a$ to $b$
in $\nu$ steps is the $(a,b)^{th}$ element of $P^{\nu}.$ However
we need to find the probability of such transition through some specific
ways.

\subsubsection{Defining possible missing paths for a transition:}

Consider two states $a$ and $b$. Suppose we are interested in transition
from $a$ to $b$ in $\nu$ steps. Each possible way of transition
from $a$ to $b$ in $\nu$ steps is called a path of order $\nu.$
We call a path of order $1$ as edge.Thus a any given path consists
of one or more edges. Clearly the transition from $a$ to $b$ in
$\nu$ steps can occur through one or more paths. We classify these
paths in two categories based on the given filter matrix:
\begin{itemize}
\item \textbf{observed path}($\mathcal{O}$): whose all edges are observed.
\item \textbf{unobserved path}($\mathcal{U}$): whose all edges are unobserved.
\end{itemize}
Clearly the two sets $\mathcal{O}$ and $\mathcal{U}$ are not mutually
exhaustive, that is, we cannot classify all paths into any one of
these categories.
\begin{example}
Consider a two state markov chain and two filter matrices $F_{1}$
and $F_{2}$ such that 
\[
F_{1}=\begin{bmatrix}1 & 0\\
1 & 1
\end{bmatrix}\qquad\qquad F_{2}=\begin{bmatrix}1 & 0\\
0 & 1
\end{bmatrix}
\]
Suppose we consider the \emph{transition from state $1$ to state
$1$ in two steps}. The possible paths are: 
\[
w_{1}:\:1\longrightarrow1\longrightarrow1\qquad\qquad w_{2}:\:1\longrightarrow2\longrightarrow1
\]
For filter matrix $F_{1}$, path $w_{1}\in\mathcal{O}$, i.e. path
$w_{1}$is observed whereas $\mathcal{U}$ is empty, i.e. no paths
are unobserved. For filter matrix $F_{2}$, path $w_{1}\in\mathcal{O}$
and $w_{2}\in\mathcal{U}$. If we consider the \emph{transition from
state $2$ to state $2$ in two steps}, the possible paths are: 
\[
w_{1}:\:2\longrightarrow2\longrightarrow2\qquad\qquad w_{2}:\:2\longrightarrow1\longrightarrow2
\]
For filter matrix $F_{1}$, path $w_{1}\in\mathcal{O}$, and $\mathcal{U}$
is empty whereas for filter matrix $F_{2}$, path $w_{1}\in\mathcal{O}$
and $w_{2}\in\mathcal{U}$.
\end{example}
Now consider the transition probability matrix $P$ of the markov
chain. We construct two matrices $P^{[0]}=((p_{ij}^{[0]}))$ and $P^{[1]}=((p_{ij}^{[1]}))$
as  
\[
p_{ij}^{[0]}=\begin{cases}
0 & \textrm{if}\:f_{ij}=1\\
p_{ij} & \textrm{if}\:f_{ij}=0
\end{cases}
\]
and 
\[
p_{ij}^{[1]}=\begin{cases}
0 & \textrm{if}\:f_{ij}=0\\
p_{ij} & \textrm{if}\:f_{ij}=1
\end{cases}
\]
Then the $(i,j)^{th}$ element of $(P^{[0]})^{\nu}$ gives the probability
of going \emph{from state $i$ to state $j$ in }$\nu$\emph{ steps
through unobserved path(s)}. Also the $(i,j)^{th}$ element of $(P^{[1]})^{\nu}$
gives the probability of going \emph{from state $i$ to state $j$
in $\nu$ steps through observed path(s)}. 
\begin{example}
Returning to the previous example we see that for the filter matrix
$F_{1}$, 
\[
P=\begin{bmatrix}p_{11} & p_{12}\\
p_{21} & p_{22}
\end{bmatrix}\qquad P^{[0]}=\begin{bmatrix}0 & p_{12}\\
0 & 0
\end{bmatrix}\qquad P^{[1]}=\begin{bmatrix}p_{11} & 0\\
p_{21} & p_{22}
\end{bmatrix}
\]
Then 
\[
(P^{[0]})^{2}=\begin{bmatrix}0 & 0\\
0 & 0
\end{bmatrix}\qquad(P^{[1]})^{2}=\begin{bmatrix}p_{11}^{2} & 0\\
p_{21}p_{11}+p_{22}p_{21} & p_{22}^{2}
\end{bmatrix}
\]
Thus for filter matrix $F_{1}$, probability of going from any state
$i$ to any state $j$ through the unobserved paths in $2$ steps
is zero. Also 
\[
(P^{[0]})^{\nu}=\begin{bmatrix}0 & 0\\
0 & 0
\end{bmatrix}\quad\textrm{for any }\nu
\]
which means that the probability of going from any state $i$ to any
state $j$ through the unobserved paths in any steps is zero. 

Similarly for filter matrix $F_{2}$,
\[
P^{[0]}=\begin{bmatrix}0 & p_{12}\\
p_{21} & 0
\end{bmatrix}\qquad(P^{[0]})^{2}=\begin{bmatrix}p_{12}p_{21} & 0\\
0 & p_{21}p_{12}
\end{bmatrix}
\]
Thus for $F_{2}$, the probability of going from state $1$ to state
$1$ in $2$ steps through the unobserved paths is $p_{12}p_{21}$
and the probability of going from state $2$ to state $2$ in $2$
steps through the unobserved paths is $p_{21}p_{12}$. 
\end{example}
Thus given a filter matrix, the probability of going from a state
$a$ to a state $b$ in $\nu$ steps through the unobserved paths
is the $(a,b)^{th}$ element of $(P^{[0]})^{\nu}$ which is $p_{ab}^{(\nu)0}$.

\subsubsection{Estimation by EM Algorithm:}

For the $i^{th}$ transition, let, 
\[
Y_{1i}=\textrm{state from where the transition occurs}
\]
\[
Y_{2i}=\textrm{state to where the transition occurs}
\]
Thus $Y_{1i}$ and $Y_{2i}$ are two discrete random variables taking
values in the state space $\{1,2,\cdots k\}$ for all $i$. Let us
express the total number of transitions $n_{\alpha\beta}$ from the
state $\alpha$ to the state $\beta$ as $n_{\alpha\beta}={\displaystyle \sum_{i=1}^{n}I(Y_{1i}=\alpha,Y_{2i}=\beta)}$
where 
\[
I(Y_{1i}=\alpha,Y_{2i}=\beta)=\begin{cases}
1 & if\;Y_{1i}=\alpha,Y_{2i}=\beta\\
0 & \textrm{otherwise}.
\end{cases}
\]
The complete data likelihood then can be written as 
\[
L_{com}(p)\propto\prod_{i=1}^{n}f(y_{1i},y_{2i}|p)=constant\times\prod_{\alpha,\beta=1}^{k}p_{\alpha\beta}^{{\displaystyle n_{\alpha\beta}}}
\]
where 
\[
p_{\alpha k}=1-\sum_{j=1}^{k-1}p_{\alpha j}\:\forall\alpha=1(1)n.
\]
After $t$ iterations in the EM algorithm we write the E-step and
the M-step as:

\medskip{}

\textbf{\uline{E-step:}}

Let $P_{(t)}=((p_{\alpha\beta(t)}))$ be the value of the transition
probability matrix after $t$ iterations. The corresponding value
of the parameter $\boldsymbol{\theta}$ is $\boldsymbol{\theta}^{(t)}.$
The other matrices we construct take the values $P_{(t)}^{[0]}$ and
$(P_{(t)}^{[0]})^{\nu}=((p_{ab(t)}^{(\nu)0})).$ Then we compute the
expected complete data log-likelihood with respect to the conditional
distribution of $\mathbf{x}|\phi_{\mathbf{F}}(\mathbf{x}),\boldsymbol{\theta}^{(t)}.$
The complete data log-likelihood is given by 
\[
\ell_{com}(\boldsymbol{\theta})=constant+\sum_{\alpha,\beta=1}^{k}\Big\{(\log p_{\alpha\beta})\times{\displaystyle n_{\alpha\beta}}\Big\}.
\]
We then compute 
\[
Q(\boldsymbol{\theta})=E(\ell_{com}(\boldsymbol{\theta})|\phi_{\mathbf{F}}(\mathbf{x}),\boldsymbol{\theta}^{(t)}).
\]
Since $\ell_{com}(\boldsymbol{\theta})$ is linear in $n_{\alpha\beta}$,
we need to compute 
\[
E\Big(n_{\alpha\beta}|\phi_{\mathbf{F}}(\mathbf{x}),\boldsymbol{\theta}^{(t)}\Big)
\]
\[
=\sum_{i=1}^{n}E\Big(I(Y_{1i}=\alpha,Y_{2i}=\beta)|\phi_{\mathbf{F}}(\mathbf{x}),\boldsymbol{\theta}^{(t)}\Big)
\]
\[
=\sum_{i=1}^{n}P\Big(Y_{1i}=\alpha,Y_{2i}=\beta|\phi_{\mathbf{F}}(\mathbf{x}),\boldsymbol{\theta}^{(t)}\Big).
\]
Let us denote $P\Big(Y_{1i}=\alpha,Y_{2i}=\beta|\phi_{\mathbf{F}}(\mathbf{x}),\boldsymbol{\theta}^{(t)}\Big)=P_{\alpha\beta}^{i}$.
Then for each $i$, $P_{\alpha\beta}^{i}$ takes one of the three
forms $P_{\alpha\beta}^{i(1)}$, $P_{\alpha\beta}^{i(2)}$ or $P_{\alpha\beta}^{i(3)}$
as follows: 
\begin{itemize}
\item Case I ($Y_{1i}$ observed): Suppose we have a missing chain of length
$\nu-1$ with the next observed state $b.$ Then 
\[
P_{\alpha\beta}^{i}=\begin{cases}
\frac{p_{\alpha\beta}\times p_{\beta b}^{(\nu-1)0}}{p_{\alpha b}^{(\nu)0}} & if\:Y_{1i}=\alpha\\
0 & if\:Y_{1i}\neq\alpha
\end{cases}=:P_{\alpha\beta}^{i(1)}\textrm{, say.}
\]

\item Case II ($Y_{2i}$ observed): Suppose we have a missing chain of length
$\nu-1$ with the previous observed state $a.$ Then
\[
P_{\alpha\beta}^{i}=\begin{cases}
\frac{p_{a\alpha}^{(\nu-1)0}\times p_{\alpha\beta}}{p_{a\beta}^{(\nu)0}} & if\:Y_{2i}=\beta\\
0 & if\:Y_{2i}\neq\beta
\end{cases}=:P_{\alpha\beta}^{i(2)}\textrm{, say.}
\]

\item Case III (Both are not observed): Suppose we have a missing chain
of length $\nu-1$ with the previous observed state $a$ and the next
observed state $b$. Then 
\[
P_{\alpha\beta}^{i}=\frac{p_{a\alpha}^{(m)0}p_{\alpha\beta}p_{\beta b}^{(n)0}}{p_{ab}^{(\nu)0}}=:P_{\alpha\beta}^{i(3)}\textrm{, say.}
\]
where $m+n=\nu-1$ and $a=Y_{1(i-m+1)}$ and $b=Y_{2(i+n)}$. If there
is no such next observed state (that is, the observed chain ends)
then 
\[
P_{\alpha\beta}^{i}=\frac{p_{a\alpha}^{(m)0}p_{\alpha\beta}({\displaystyle \sum_{b}}p_{\beta b}^{(n)0})}{{\displaystyle \sum_{b}}p_{ab}^{(\nu)0}}=:P_{\alpha\beta}^{i(3)}\textrm{, say.}
\]

\end{itemize}
\medskip{}

\textbf{\uline{M-step:}}

We try to maximize $Q(\boldsymbol{\theta})$ with respect to $\boldsymbol{\theta}.$
Setting $\frac{\partial}{\partial\theta_{j}}Q(\boldsymbol{\theta})=0$
for each $j=1(1)d$ we get 
\[
\boldsymbol{\theta}^{(t+1)}=(\theta_{1}^{(t+1)},\theta_{2}^{(t+1)},\cdots,\theta_{d}^{(t+1)})
\]
where 
\[
\theta_{j}^{(t+1)}=\frac{{\displaystyle \sum_{l=1}^{n}P_{1j}^{l}}}{{\displaystyle \sum_{\beta=1}^{k}}{\displaystyle \sum_{l=1}^{n}P_{1\beta}^{l}}}\quad\textrm{for any }j=1,2,\cdots,(k-1)
\]
and 
\[
\theta_{(i-1)k+j}^{(t+1)}=\frac{{\displaystyle \sum_{l=1}^{n}P_{i(j+1)}^{l}}}{{\displaystyle \sum_{\beta=1}^{k}}{\displaystyle \sum_{l=1}^{n}P_{i\beta}^{l}}}\quad\textrm{for any }j=0,1,\cdots,(k-1)\textrm{ and }i=2,3,\cdots,k.
\]

\subsection{Estimation of Standard Errors:}

Since EM estimate of the parameters are the maximum likelihood estimate
of the observed likelihood, the large sample covariance matrix can
be obtained by inverting the observed information matrix. But in our
problem the observed likelihood is not known explicitly. An alternative
way is using Supplemented EM Algorithm (SEM) which allows us to find
the large sample covariance matrix without inverting the estimate
of the observed information matrix. SEM algorithm is a procedure of
obtaining a numerically stable estimate of the covariance matrix of
the estimated parameters using only the code for the steps in EM algorithm,
code for computing the large sample complete data covariance matrix
and standard matrix operations.

\subsubsection{\textbf{Supplemented EM Algorithm}:}

Since each step of the EM algorithm produces a fresh estimate of the
parameter from the previous estimates, EM algorithm can be considered
as a mapping $M$ on the parameter space. The derivative of the EM
mapping, which we call $M_{(1)}$, can be expressed in the form 
\[
M_{(1)}=i_{mis}i_{com}^{-1}=I-i_{obs}i_{com}^{-1}.
\]
The above equation implies 
\[
i_{obs}^{-1}=i_{com}^{-1}(I-M_{(1)})^{-1}
\]
which in turn implies
\[
V_{obs}=V_{com}(I-M_{(1)})^{-1}.
\]
Now we note that 
\[
V_{obs}=V_{com}(I+M_{(1)}-M_{(1)})(I-M_{(1)})^{-1}=V_{com}+\vartriangle V
\]
where $\vartriangle V=V_{com}M_{(1)}(I-M_{(1)})^{-1}$ is the increment
in variance due to missingness.

\subsubsection{\textbf{Calculation of $V_{com}$:}}

The complete data log-likelihood is 
\[
\ell_{com}(\boldsymbol{\theta})=constant+\sum_{i=1}n_{ij}\log p_{ij}\qquad\textrm{where }p_{ik}=1-\sum_{j=1}^{k-1}p_{ij}\:\forall i.
\]
\[
\therefore\quad\frac{\partial}{\partial p_{ij}}\ell_{com}=\frac{n_{ij}}{p_{ij}}-\frac{n_{ik}}{p_{ik}}
\]
Thus the gradient vector is 
\[
\mathbf{S=}\begin{bmatrix}\frac{n_{11}}{p_{11}}-\frac{n_{1k}}{p_{1k}}\\
\vdots\\
\frac{n_{k(k-1)}}{p_{k(k-1)}}-\frac{n_{kk}}{p_{kk}}
\end{bmatrix}
\]
Now for $i\neq i'\qquad\frac{\partial^{2}}{\partial p_{ij}\partial p_{i'j'}}\ell_{com}=0$.
Also we have 
\[
\frac{\partial^{2}}{\partial p_{ij}\partial p_{ij'}}\ell_{com}=\begin{cases}
\frac{n_{ik}}{2p_{ik}^{2}} & \quad\textrm{if }j\neq j'\\
\frac{1}{2}\Bigg[\frac{n_{ik}}{p_{ik}^{2}}-\frac{n_{ij}}{p_{ij}^{2}}\Bigg] & \quad\textrm{if }j=j'
\end{cases}
\]
Let $B$ be the matrix of the negatives of the second order derivatives.
Then $B$ is a matrix of order $k^{2}-k$ such that $B=blockdiagonal(B_{1},B_{2},\cdots B_{k})$
where $B_{i}=((b_{jj'}^{i}))_{k-1}$ where $b_{jj'}^{i}=-\frac{\partial^{2}}{\partial p_{ij}\partial p_{ij'}}\ell_{com}.$ 

Then the fisher information matrix of the complete data is 
\[
i_{com}=E(B\,|\boldsymbol{\theta},data)=blockdiagonal\Big(E(B_{1}),E(B_{2}),\cdots,E(B_{k})\Big)
\]
where $E(B_{i})=((E(b_{jj'}^{i}|\boldsymbol{\theta},data)))$. Thus
the variance-covariance matrix of the complete data is $V_{com}=i_{com}^{-1}.$

\subsubsection{\textbf{Computing $M_{(1)}$ by numerical differentiation:}}

For our problem the mapping $M=M(\theta_{1},\theta_{2},\cdots,\theta_{d}):\boldsymbol{\Theta}\rightarrow\boldsymbol{\Theta}$
is not known explicitly. The derivative of $M$ at $\hat{\boldsymbol{\theta}}$
is calculated numerically from the output of the forced EM steps.
$M_{(1)}$ is the matrix with the $(i,j)^{th}$element as $\frac{\vartriangle M_{j}}{\vartriangle\hat{\theta}_{i}}=$change
in the $j^{th}$ component of $M$ due to the change in the $i^{th}$element
of $\hat{\boldsymbol{\theta}}$. For this we start with the EM estimate
$\hat{\boldsymbol{\theta}}$ and change its $i^{th}$ element $\hat{\theta}_{i}$
by $\theta_{i}^{(t)}$. We call this resultant estimate by $\boldsymbol{\theta}^{(t)}(i)$
and run one EM iteration on it to get $\tilde{\boldsymbol{\theta}}^{(t+1)}(i)$.
Then $\vartriangle M_{j}=\tilde{\theta}_{j}^{(t+1)}(i)-\hat{\theta}_{j}$
and $\vartriangle\hat{\theta}_{i}=\theta_{i}^{(t)}-\hat{\theta}_{i}$
and so we compute the ratio $r_{ij}=\frac{\vartriangle M_{j}}{\vartriangle\hat{\theta}_{i}}.$
Thus we run a sequence of SEM iterations, where the $(t+1)^{th}$
iteration is defined as follows:

\begin{algorithm}
\caption{SEM Algorithm}

We take as input $\hat{\boldsymbol{\theta}}$and $\boldsymbol{\theta}^{(t)}$. 
\begin{enumerate}
\item Run the usual E step and M steps to get $\boldsymbol{\theta}^{(t+1)}$;
\item Fix $i=1$. Calculate 
\[
\boldsymbol{\theta}^{(t)}(i)=(\hat{\theta}_{1},\cdots\hat{\theta}_{i-1},\theta_{i}^{(t)},\hat{\theta}_{i+1},\cdots,\hat{\theta}_{d})
\]
which is $\hat{\boldsymbol{\theta}}$ except the $i^{th}$ component
which equals $\theta_{i}^{(t)}.$
\item Treating $\boldsymbol{\theta}^{(t)}(i)$ as the current estimate of
$\boldsymbol{\theta}$, run one iteration of EM to obtain $\tilde{\boldsymbol{\theta}}^{(t+1)}(i)$.
\item Obtain the ratio 
\[
r_{ij}^{(t)}=\frac{\tilde{\theta}_{j}^{(t+1)}(i)-\hat{\theta}_{j}}{\theta_{i}^{(t)}-\hat{\theta}_{i}}\qquad\textrm{for }j=1,2,\cdots,d.
\]

\item Repeat steps 2 to 4 for $i=1,2,\cdots,d$.
\end{enumerate}
We get as output $\boldsymbol{\theta}^{(t+1)}$ and $\{r_{ij}^{(t)}:\;i,j=1,2,\cdots,d\}$.

$M_{(1)}$ is the limiting matrix $\{r_{ij}\}$ as $t\rightarrow\infty$.
\end{algorithm}

A difficulty in running the SEM iterations is that while changing
the $i^{th}$ element $\hat{\theta}_{i}$ by $\theta_{i}^{(t)}$ the
resultant estimate $\boldsymbol{\theta}^{(t)}(i)$ may not belong
to the parameter space $\boldsymbol{\Theta}$ because the sum of the
corresponding row probabilities ${\displaystyle \sum}_{j=1}^{k-1}p_{ij}$
may be more than $1$. Thus theoretically the mapping $M$ may not
be defined in such cases. Then we replace $\theta_{i}^{(t)}$ by $\theta_{i}^{(t)}-\epsilon\;,(\epsilon>0)$
so that the corresponding sum of probability is less than $1.$

\subsubsection{\textbf{Implementational Issues:}}

While implementing the SEM algorithm it is always safe to start with
the initial values of the original EM algorithm for numerical accuracy.
But this may result in too many unnecessary iterations because the
initial choice may be too far from the MLE. Hence Meng and Rubin suggested
to take the initial choice in SEM as a suitable iterate of the EM
algorithm or two complete data standard deviations from the MLE. Computation
of $M_{(1)}$ being numerical differentiation is less accurate than
evaluating the function $M$ itself. Hence the stopping criterion
should be less stringent for SEM algorithm as compared to the original
EM algorithm. Meng and Rubin suggested to use square root of the stopping
criterion of the original EM as the stopping criterion for SEM.

The observed variance covariance matrix obtained by SEM algorithm
should be theoretically a real symmetric positive definite matrix.
This provide a diagnostics for programming errors and numerical precision.
The numerical symmetry of the final matrix increases with more stringent
criterion in the algorithm.

\subsection{Testing of Hypothesis}

The large sample inferences on the EM estimate can be drawn using
the asymptotic distribution 
\[
\hat{\boldsymbol{\theta}}\sim N(\boldsymbol{\theta},V_{obs})
\]

Since SEM algorithm helps us to numerically estimate $V_{obs}$, we
can use the above distribution for testing of the parameters and finding
confidence intervals.

\subsubsection{Testing the transition probability matrix}

Suppose we wish to test the hypothesis $H_{0}:P=P_{0}$. Since only
$k(k-1)$ parameters of the transition probability matrix are independent,
the above hypothesis is equivalent to $H_{0}:\boldsymbol{\theta}=\boldsymbol{\theta}_{0}$.
Now 
\[
(\hat{\boldsymbol{\theta}}-\boldsymbol{\theta})'V_{obs}^{-1}(\hat{\boldsymbol{\theta}}-\boldsymbol{\theta})\sim\chi_{k^{2}}^{2}
\]
which implies the test statistic for testing $H_{0}$ is $\chi^{2}=(\hat{\boldsymbol{\theta}}-\boldsymbol{\theta}_{0})'V_{obs}^{-1}(\hat{\boldsymbol{\theta}}-\boldsymbol{\theta}_{0})$
which has $\chi_{k^{2}}^{2}$ distribution under $H_{0}.$ Thus the
critical region for testing $H_{0}$ is $\{\mathbf{x}:\chi^{2}>\chi_{k^{2},\alpha}^{2}\}$

\subsubsection{Test of hypotheses about specific probabilities and confidence regions}

First we consider testing the hypothesis that certain transition probabilities
$p_{ij}$ have specified values $p_{ij}^{0}.$ Under the null hypothesis
$H_{0i}:\theta_{i}=\theta_{i}^{0}$, the statistic $\tau_{i}=\frac{\hat{\theta}_{i}-\theta_{i}^{0}}{\sqrt{s_{ii}}}$
has $N(0,1)$ distribution. Thus the critical region for testing $H_{0i}$
is $\{\mathbf{x}:|\tau_{i}|>z_{\alpha/2}\}$. The $100(1-\alpha)\%$
confidence interval for $\theta_{i}$ is $(\hat{\theta}_{i}-\sqrt{s_{ii}}z_{\alpha/2},\hat{\theta}_{i}+\sqrt{s_{ii}}z_{\alpha/2})$.

\section{Multiple Markov chains}

Let $\{X_{n}\}$ be a $s^{th}$ order markov chain. In the previous
sections we have discussed the case where $s=1,$ that is, simple
markov chains. If $s>1$, then $\{X_{n}\}$ is called a multiple markov
chain of order $s$ with transition probabilities
\[
p_{a_{1},\cdots,a_{s}:a_{s+1}}=P(X_{n}=a_{s+1}|\,X_{n-1}=a_{s},X_{n-2}=a_{s-1},\cdots,X_{n-s}=a_{1}).
\]
Multiple markov chains of any order can be reduced to a simple markov
chain by the following technique. 

Suppose $\{X_{n}\}$ is called a markov chain of order $s$ with $k$
states. We define a new stochastic process $\{Y_{n},n=1,2,\cdots\}$
where $Y_{n}=(X_{n},X_{n+1},\cdots,X_{n+s-1}).$ Then $\{Y_{n}\}$
is a simple markov chain whose state space has $k^{s}$ different
$s-$tuples. The transition probabilities of the new defined markov
process are
\[
p_{(a_{1},a_{2},\cdots,a_{s})(b_{1},b_{2},\cdots,b_{s})}=\begin{cases}
p_{a_{1}a_{2}\cdots a_{s}:b_{s}} & \textrm{if }\:b_{i}=a_{i+1},i=1,2,\cdots,s-1\\
0 & \textrm{otherwise.}
\end{cases}
\]
The number of positive entries in the $k^{s}\times k^{s}$ transition
probability matrix is $k^{s+1}.$ The parameters of interest are the
probabilities $p_{a_{1},\cdots,a_{s}:a_{s+1}}$ which requires estimation
from the data. 

In this situation we apply our filtering technique to the chain $\{y_{n}\}.$
But now the transition probability matrix contains many zero elements
and hence the additional restriction described section $4$ needs
to be applied on the filter matrices. We note that in this case the
transition probability matrix satisfies the assumption made in section
$5.$ The technique of estimation of the parameters from the data
$\phi_{\mathbf{F}}(\mathbf{y})$ remains same as in the simple markov
chain.

\section{Simulation Study}

For simulation we start with a markov chain with $3$ states. A markov
chain of length $1000$ is being generated with the transition probability
matrix 

\[
P=\begin{bmatrix}0.2 & 0.3 & 0.5\\
0.8 & 0.1 & 0.1\\
0.7 & 0.1 & 0.2
\end{bmatrix}.
\]
The filter matrix for generating the observed chain is 
\[
F=\begin{bmatrix}0 & 1 & 0\\
1 & 1 & 0\\
1 & 0 & 0
\end{bmatrix}.
\]
Clearly the filter matrix used satisfy the sufficient condition for
estimability. With this filter matrix we reduce $16\%$ of the data,
i.e., from the complete markov chain of length $1000$ we do not observe
$16\%$ of the data. The precision we use in estimating the parameters
through the steps of the EM algorithm is of the order $10^{-12}$
and the precision used in computing the standard error is of the order
$10^{-6}.$ With this precision the estimated transition probability
matrix is 
\[
\hat{P}=\begin{bmatrix}0.2411168 & 0.2850831 & 0.4738001\\
0.7395851 & 0.1429865 & 0.1174284\\
0.7648870 & 0.1067367 & 0.1283763
\end{bmatrix}.
\]
The observed variance covariance matrix $V_{obs}$ as computed by
the SEM algorithm is 
\[
\begin{bmatrix}8.00\times10^{-4} & -3.00\times10^{-4} & 2.36\times10^{-5} & 4.55\times10^{-6} & 7.13\times10^{-4} & 7.78\times10^{-5}\\
-3.00\times10^{-4} & 4.70\times10^{-4} & -8.85\times10^{-6} & -1.71\times10^{-6} & -2.68\times10^{-4} & -2.92\times10^{-5}\\
2.36\times10^{-5} & -8.84\times10^{-6} & 1.01\times10^{-3} & -5.10\times10^{-4} & -8.05\times10^{-7} & -5.30\times10^{-5}\\
4.60\times10^{-6} & -1.73\times10^{-6} & -5.10\times10^{-4} & 6.06\times10^{-4} & -5.92\times10^{-8} & -1.02\times10^{-5}\\
7.13\times10^{-4} & -2.68\times10^{-4} & -7.52\times10^{-7} & -1.45\times10^{-7} & 1.80\times10^{-3} & -1.06\times10^{-4}\\
7.78\times10^{-5} & -2.92\times10^{-5} & -5.30\times10^{-5} & -1.02\times10^{-5} & -1.06\times10^{-4} & 3.92\times10^{-4}
\end{bmatrix}.
\]
The complete data variance covariance matrix $V_{com}$ is 
\[
\begin{bmatrix}0.0003674 & -0.0001380\\
-0.0001380 & 0.0004092\\
 &  & 0.0009496 & -0.0005214\\
 &  & -0.0005214 & 0.0006041\\
 &  &  &  & 0.0006033 & -0.0002739\\
 &  &  &  & -0.0002739 & 0.0003199
\end{bmatrix}
\]
The increase in variance $\triangle V$ is 
\[
\begin{bmatrix}4.33\times10^{-4} & -1.63\times10^{-4} & 2.36\times10^{-5} & 4.58\times10^{-6} & 7.13\times10^{-4} & 7.78\times10^{-5}\\
-1.63\times10^{-4} & 6.11\times10^{-5} & -8.86\times10^{-6} & -1.71\times10^{-6} & -2.68\times10^{-4} & -2.92\times10^{-5}\\
2.35\times10^{-5} & -8.84\times10^{-6} & 5.75\times10^{-5} & 1.11\times10^{-5} & -8.06\times10^{-7} & -5.30\times10^{-5}\\
4.61\times10^{-6} & -1.73\times10^{-6} & 1.11\times10^{-5} & 2.15\times10^{-6} & -5.92\times10^{-8} & -1.02\times10^{-5}\\
7.13\times10^{-4} & -2.68\times10^{-4} & -7.52\times10^{-7} & -1.45\times10^{-7} & 1.20\times10^{-3} & 1.68\times10^{-4}\\
7.78\times10^{-5} & -2.92\times10^{-5} & -5.30\times10^{-5} & -1.02\times10^{-5} & 1.68\times10^{-4} & 7.22\times10^{-5}
\end{bmatrix}
\]

\section{Practical example}

The data consists of the daily rainfall, measured in millimeters times
10, at Alofi in the Niue Island group. 1096 observations were recorded
from 1st January 1987 until 31st December 1989. The data is classified
into three states: state 1 which represents ``no rain'', state 2
which represents ``from non zero until 5mm'' and state 3 which represents
``more than 5mm'' rain. This time series data can be considered
as a 3 state markov chain. P. J. Avery and D. A. Henderson (1999)
used this dataset for the fitting of markov model. 

For the generation of the observed data we use the same filter matrix
as in case of the simulated data. From 1096 observations we find that
this filter matrix leads to a missingness of 45.35\%. While storing
only 54.65\% of the original data we find the estimate of the transition
probability matrix is 

\[
\begin{bmatrix}0.6717154 & 0.2231926 & 0.1050920\\
0.4585938 & 0.3034812 & 0.2379251\\
0.2137608 & 0.3447883 & 0.4414509
\end{bmatrix}.
\]
We compute the observed variance covariance matrix as 
\[
\begin{bmatrix}4.65\times10^{-4} & -3.16\times10^{-4} & 1.23\times10^{-6} & 8.14\times10^{-7} & 1.15\times10^{-4} & 1.81\times10^{-4}\\
-3.16\times10^{-4} & 3.41\times10^{-4} & -8.51\times10^{-7} & -5.63\times10^{-7} & -7.76\times10^{-5} & -1.23\times10^{-4}\\
1.23\times10^{-6} & -8.51\times10^{-7} & 9.21\times10^{-4} & -4.18\times10^{-4} & -4.76\times10^{-5} & -3.09\times10^{-4}\\
8.14\times10^{-7} & -5.63\times10^{-7} & -4.18\times10^{-4} & 7.49\times10^{-4} & -3.15\times10^{-5} & -2.05\times10^{-4}\\
1.15\times10^{-4} & -7.76\times10^{-5} & -4.76\times10^{-5} & -3.15\times10^{-5} & 9.26\times10^{-4} & 1.51\times10^{-4}\\
1.81\times10^{-4} & -1.23\times10^{-4} & -3.09\times10^{-4} & -2.05\times10^{-4} & 1.51\times10^{-4} & 0.0026
\end{bmatrix}.
\]
The complete data variance covariance matrix $V_{com}$ is 
\[
\begin{bmatrix}0.000391 & -0.000266\\
-0.000266 & 0.000307\\
 &  & 0.000837 & -0.000469\\
 &  & -0.000469 & 0.0007128\\
 &  &  &  & 0.0007185 & -0.000315\\
 &  &  &  & -0.000315 & 0.0009658
\end{bmatrix}.
\]
The increase in variance due to missingness is 
\[
\begin{bmatrix}7.48\times10^{-5} & -5.00\times10^{-5} & 1.23\times10^{-6} & 8.14\times10^{-7} & 1.15\times10^{-4} & 1.81\times10^{-4}\\
-5.00\times10^{-5} & 3.38\times10^{-5} & -8.50\times10^{-7} & -5.63\times10^{-7} & -7.76\times10^{-5} & -1.23\times10^{-4}\\
1.23\times10^{-6} & -8.50\times10^{-7} & 8.39\times10^{-5} & 5.55\times10^{-5} & -4.76\times10^{-5} & -3.09\times10^{-4}\\
8.14\times10^{-7} & -5.63\times10^{-7} & 5.55\times10^{-5} & 3.68\times10^{-5} & -3.15\times10^{-5} & -2.05\times10^{-4}\\
1.15\times10^{-4} & -7.76\times10^{-5} & -4.76\times10^{-5} & -3.15\times10^{-5} & 2.08\times10^{-5} & 4.66\times10^{-4}\\
1.81\times10^{-4} & -1.23\times10^{-4} & -3.09\times10^{-4} & -2.05\times10^{-4} & 4.66\times10^{-4} & 0.0016079
\end{bmatrix}.
\]

\section{Appendix}

\subsection*{Proof of Theorem 13:}
\begin{proof}
We split the proof in three parts. We shall prove $\mathbb{C}_{i}\subseteq\mathcal{I},i=1,2,3.$
This will imply that $\mathbb{C}_{*}\subseteq\mathcal{I}.$ 

\textbf{\uline{Part 1}}:

Suppose a filter matrix $\mathbf{M}\in\mathbb{C}_{1}$. Then the $\alpha^{th}$
row and $\beta^{th}$column of $\mathbf{M}$ are zero and all other
rows and columns of $\mathbf{M}$ have exactly one element nonzero.

\textbf{Case a:} $\alpha\neq\beta$ 

\medskip{}

\textbf{Step 1}:

Consider $p_{ij}\quad1\leq i,j\leq k\:\;,i,j\neq\alpha,\beta$. 

Let the $i^{th}$column has a element $f_{ai}=1$ and let the $j^{th}$
row has a element $f_{jb}=1$. 

Then $P(S_{aijb})>0$. This implies $P(S_{ij})>0.$

Hence corollary 12 implies that $p_{ij}\quad1\leq i,j\leq k\:\;,i,j\neq\alpha,\beta$
are estimable.

\medskip{}

\textbf{Step 2:}

Next from the $\beta^{th}$ column of the transition probability matrix
consider $p_{i\beta},\forall i=1(1)k,\:i\neq\beta$.

Since $\beta\neq\alpha$, we have a $j$ such that $f_{\beta j}=1$ 

Also since $i\neq\beta$ we have $a$ such that $f_{ai}=1$

Then $P(S_{ai\beta j})>0$. This implies $P(S_{i\beta})>0.$

Again corollary 12 implies that $p_{i\beta},\forall i=1(1)k,\:i\neq\beta$
are estimable.

\medskip{}

\textbf{Step 3:}

Next from the $\alpha^{th}$ row of the transition probability matrix
consider $p_{\alpha j},\forall j=1(1)k,\:j\neq\alpha$.

For $p_{\alpha j}$ choose $i$ and $r$ such that $f_{i\alpha}=1\quad i\neq\alpha$
and $f_{jr}=1$ 

Then $P(S_{i\alpha jr})>0$. This implies $P(S_{\alpha j})>0.$

From corollary 12 we get $p_{\alpha j},\forall j=1(1)k,\:j\neq\alpha$
is estimable.

\textbf{Step 4:}

The parameter $p_{\alpha\alpha}$ is estimable from the condition
${\displaystyle \sum_{j}}p_{\alpha j}=1$

\textbf{Step 5:}

From the $\beta^{th}$ row of the transition probability matrix consider
$p_{\beta j},\forall j=1(1)k,\:j\neq\alpha$.

If $j$ is such that $f_{\beta j}=1$ then we get that $p_{\beta j}$
is estimable. Hence we now consider $j$ to be such that $f_{\beta j}=0$.

For this we now choose any state $a$ and a state $s$ such that $f_{js}=1$. 

Then $P(S_{a\:\_\:js})>0$ which implies $P(S_{\pi})>0$ where $\pi=a\:\_\:js$.

Let $C=\{b:f_{ab}=0\quad,f_{bj}=0\}$. We note that $\beta\in C$
and $p_{\pi}$ is of the form 

\[
p_{\pi}=(\sum_{b\in C,b\neq\beta}p_{ab}p_{bj}+p_{a\beta}p_{\beta j})\times p_{js}
\]
Now since $P(S_{\pi})>0,$ lemma 11 implies that $p_{\pi}$ is identifiable.
Hence 
\[
p_{\pi}=(\sum_{b\in C,b\neq\beta}p_{ab}p_{bj}+p_{a\beta}p_{\beta j})\times p_{js}=\textrm{Known Constant}
\]

Since all $p_{ab}$ and $p_{bj}$ and also $p_{a\beta}$ are identifiable
, we get $p_{\beta j},\forall j=1(1)k,\:j\neq\alpha$ are estimable.

\textbf{Step 6:}

From the $\alpha^{th}$ column of the transition probability matrix
consider $p_{i\alpha},\forall i=1(1)k,\:i\neq\beta$

If $i$ is such that $f_{i\alpha}=1$ then we get that $p_{i\alpha}$
is estimable. Hence we now consider $i$ to be such that $f_{i\alpha}=0$.

For this we now choose any state $b$ and a state $r$ such that $f_{ri}=1$

Then $P(S_{ri\:\_\:b})>0$ which implies $P(S_{\pi})>0$ where $\pi=ri\:\_\:b$.

Let $D=\{a:f_{ab}=0\quad,f_{ia}=0\}$. We note that $\alpha\in D$
and $p_{\pi}$ is of the form 
\[
p_{\pi}=p_{ri}\times(\sum_{a\in D,a\neq\alpha}p_{ia}p_{ab}+p_{i\alpha}p_{\alpha b})
\]
Now since $P(S_{\pi})>0,$ lemma 11 implies that $p_{\pi}$ is identifiable.
Hence 
\[
p_{\pi}=p_{ri}\times(\sum_{a\in D,a\neq\alpha}p_{ia}p_{ab}+p_{i\alpha}p_{\alpha b})=\textrm{Known Constant}
\]

Since all $p_{ab}$ and $p_{ia}$ and also $p_{\alpha b}$ are identifiable
, we get $p_{i\alpha},\forall i=1(1)k,\:i\neq\beta$ are estimable.

\textbf{Step 7:}

$p_{\beta\alpha}$ is estimable from the condition ${\displaystyle \sum_{j}}p_{\beta j}=1$

\textbf{\uline{Case b}}\uline{ }: $\alpha=\beta$ 

\textbf{Step 1}:

Consider $p_{ij}\quad1\leq i,j\leq k\:\;,i,j\neq\alpha$. 

The estimatibility of $p_{ij}$ is same as step 1 of \textbf{case
(a).}

\textbf{Step 2:}

Next from the $\alpha^{th}$ column of the transition probability
matrix consider $p_{i\alpha},\forall i=1(1)k,\:i\neq\alpha$.

The parameter $p_{i\alpha}$ is identified from the condition ${\displaystyle \sum_{j}}p_{ij}=1$

\textbf{Step 3:}

Next from the $\alpha^{th}$ row of the transition probability matrix
consider $p_{\alpha j},\forall j=1(1)k,\:j\neq\alpha$

For $p_{\alpha j}$ choose $i$ and $r$ such that $f_{i\alpha}=0\quad i\neq\alpha$
and $f_{jr}=1$ .

Then $P(S_{i\:\_\:jr})>0$ which implies $P(S_{\pi})>0$ where $\pi=i\:\_\:jr$.

Let $D=\{b:f_{bj}=0\quad\textrm{and}\quad f_{ib}=0\}$. We note that
$\alpha\in D$ and $p_{\pi}$ is of the form

\[
p_{\pi}=(\sum_{b\in D,b\neq\alpha}p_{ib}p_{bj}+p_{i\alpha}p_{\alpha j})p_{jr}
\]
Now since $P(S_{\pi})>0$ lemma 11 implies that $p_{\pi}$ is identifiable.
Hence
\[
p_{\pi}=(\sum_{b\in D,b\neq\alpha}p_{ib}p_{bj}+p_{i\alpha}p_{\alpha j})p_{jr}=\textrm{Known Constant}
\]

Since each of $p_{ib},p_{bj},p_{i\alpha}$ in the above equation are
already identifiable, we get that $p_{\alpha j}$ can also be identified
uniquely.

\medskip{}

\textbf{Step 4 :}

The parameter $p_{\alpha\alpha}$ is identified from the condition
${\displaystyle \sum_{j}}p_{\alpha j}=1$

Thus all the parameters for $\mathbf{M}$ are identifiable. Hence
for any matrix $\mathbf{M}\in\mathbb{C}_{1}$, we have $\mathbf{M}\in\mathcal{F}$.
Thus $\mathbb{C}_{1}\subseteq\mathcal{I}$.

\textbf{Part 2}:

In the next case, suppose a filter matrix $\mathbf{M}\in\mathbb{C}_{2}$.
Then the $\alpha^{th}$ and $\beta^{th}$column of a filter matrix
$\mathbf{M}$ are zero and all other columns of $\mathbf{M}$ have
exactly one element nonzero.

\textbf{Step 1:}

Consider $p_{ij}\quad1\leq i,j\leq k\:\;,i\neq\alpha,\beta$. 

Let the $i^{th}$ column has a element $f_{ai}=1$ and let the $j^{th}$
row has a element $f_{jr}=1.$

Then $P(S_{aijr})>0.$ This implies $P(S_{ij})>0.$

Hence applying the corollary 12 $p_{ij}\quad1\leq i,j\leq k\:\;,i\neq\alpha,\beta$
are estimable.

\textbf{Step 2:}

Consider $p_{\alpha\alpha}$ and $p_{\beta\alpha}.$ 

Let the $\alpha^{th}$ row has a element $f_{\alpha r}=1\:,r\neq\alpha,\beta$
and we choose a $i$ such that $i\neq\alpha,\beta.$

Then $P(S_{i\:\_\:\alpha r})>0$ which means $P(S_{\pi})>0$ where
$\pi=i\:\_\;\alpha r.$

Let $D=\{b:f_{ib}=0\quad\textrm{and}\quad f_{b\alpha}=0\}$. Then
$p_{\pi}$ is of the form 

\[
p_{\pi}=(\sum_{b\in D}p_{ib}p_{b\alpha})p_{\alpha r}
\]

Clearly $\alpha,\beta\in D$ and hence we get 
\[
p_{\pi}=(\sum_{b\in D,b\neq\alpha,\beta}p_{ib}p_{b\alpha}+p_{i\alpha}p_{\alpha\alpha}+p_{i\beta}p_{\beta\alpha})p_{\alpha r}
\]
Since $P(S_{\pi})>0,$ lemma 8 implies that $p_{\pi}$ is identifiable.
Hence 
\[
p_{\pi}=(\sum_{b\in D,b\neq\alpha,\beta}p_{ib}p_{b\alpha}+p_{i\alpha}p_{\alpha\alpha}+p_{i\beta}p_{\beta\alpha})p_{\alpha r}=\textrm{Known Constant}
\]

Since $p_{ib},b\neq\alpha$ and $p_{b\alpha},b\neq\alpha,\beta$ and
$p_{\alpha r},r\neq\alpha,\beta$ are all estimable from the above
equation we get a equation of the form

This gives us a equation of the form 

\[
C_{1}p_{\alpha\alpha}+C_{2}p_{\beta\alpha}=K_{1}
\]
where $C_{i}'s$ are constants.

Also we from the condition ${\displaystyle \sum_{j}}p_{ij}=1$ , since
all other parameters are estimable we get a equation of the form 
\[
p_{\alpha\alpha}+p_{\beta\alpha}=K_{2}
\]

These two equations make $p_{\alpha\alpha}$and $p_{\beta\alpha}$
estimable.

\textbf{Step 3:}

Consider $p_{\alpha\beta}$ and $p_{\beta\beta}.$

Let the $\beta^{th}$ row has a element $f_{\beta r}=1\:,r\neq\alpha,\beta$
and we choose a $i$ such that $i\neq\alpha,\beta.$

Then $P(S_{i\:\_\:\beta r})>0$ which means $P(S_{\pi})>0$ where
$\pi=i\:\_\;\beta r.$

Let $D=\{b:f_{ib}=0\quad\textrm{and}\quad f_{b\beta}=0\}$. Then $p_{\pi}$
is of the form 

\[
p_{\pi}=(\sum_{b\in D}p_{ib}p_{b\beta})p_{\beta r}
\]

Clearly $\alpha,\beta\in D$ and hence we get 
\[
p_{\pi}=(\sum_{b\in D,b\neq\alpha,\beta}p_{ib}p_{b\beta}+p_{i\alpha}p_{\alpha\beta}+p_{i\beta}p_{\beta\beta})p_{\beta r}
\]
Since $P(S_{\pi})>0,$ lemma 11 implies that $p_{\pi}$ is identifiable.
Hence
\[
p_{\pi}=(\sum_{b\in D,b\neq\alpha,\beta}p_{ib}p_{b\beta}+p_{i\alpha}p_{\alpha\beta}+p_{i\beta}p_{\beta\beta})p_{\beta r}=\textrm{Known Constant}
\]

Since $p_{ib},b\neq\alpha$ and $p_{b\beta},b\neq\alpha,\beta$ and
$p_{\beta r},r\neq\alpha,\beta$ are all estimable from the above
equation we get a equation of the form

\[
C_{1}p_{\alpha\beta}+C_{2}p_{\beta\beta}=K_{1}
\]
where $C_{1}$ and $C_{2}$ and $K_{1}$ are constants. 

Also we from the condition ${\displaystyle \sum_{j}}p_{ij}=1$ , since
all other parameters are estimable we get a equation of the form 
\[
p_{\alpha\beta}+p_{\beta\beta}=K_{2}
\]

These two equations make $p_{\alpha\beta}$and $p_{\beta\beta}$ estimable.

Thus all the parameters for $\mathbf{M}$ are identifiable. Hence
for any matrix $\mathbf{M}\in\mathbb{C}_{2}$, we have $\mathbf{M}\in\mathcal{F}$.
Thus $\mathbb{C}_{2}\subseteq\mathcal{I}$.

\textbf{Part 3}:

Now suppose a filter matrix $\mathbf{M}\in\mathbb{C}_{3}$. Then the
$\alpha^{th}$ and $\beta^{th}$row of a filter matrix $\mathbf{M}$
are zero and all other rows of $\mathbf{M}$ have exactly one element
nonzero.

\textbf{Step 1:}

Consider $p_{ij}\quad1\leq i,j\leq k\:\;,j\neq\alpha,\beta$. 

Let the $i^{th}$ column has a element $f_{ai}=1$ and let the $j^{th}$
row has a element $f_{jr}=1.$

Then $P(S_{aijr})>0.$ This implies $P(S_{ij})>0.$

Hence applying the corollary 12 $p_{ij}\quad1\leq i,j\leq k\:\;,j\neq\alpha,\beta$
are estimable.

\textbf{Step 2:}

Consider $p_{\alpha j}\:\;j=\alpha,\beta$. 

Let the $\alpha^{th}$ column has a element $f_{i\alpha}=1\:,i\neq\alpha,\beta$
and we choose a $r$ such that $r\neq\alpha,\beta.$

Then $P(S_{i\alpha\:\_\:r})>0$ which means $P(S_{\pi})>0$ where
$\pi=i\alpha\:\_\;r.$

Let $D=\{b:f_{\alpha b}=0\quad\textrm{and}\quad f_{br}=0\}.$ Then
$p_{\pi}$ is of the form 

\[
p_{\pi}=p_{i\alpha}(\sum_{b\in D}p_{\alpha b}p_{br})
\]

Clearly $\alpha,\beta\in D$ and hence we get 
\[
p_{\pi}=(\sum_{b\in D,b\neq\alpha,\beta}p_{\alpha b}p_{br}+p_{\alpha\alpha}p_{\alpha r}+p_{\alpha\beta}p_{\beta r})p_{i\alpha}
\]

Since $P(S_{\pi})>0,$ lemma 11 implies that $p_{\pi}$ is identifiable.
Hence
\[
p_{\pi}=(\sum_{b\in D,b\neq\alpha,\beta}p_{\alpha b}p_{br}+p_{\alpha\alpha}p_{\alpha r}+p_{\alpha\beta}p_{\beta r})p_{i\alpha}=\textrm{Known Constant}
\]

Since $p_{i\alpha},i\neq\alpha$ and $p_{\alpha b},b\neq\alpha,\beta$
and $p_{\beta r},r\neq\alpha,\beta$ and $p_{ab},\quad a,b\neq\alpha,\beta$
are all estimable from the above equation we get a equation of the
form 
\[
C_{1}p_{\alpha\alpha}+C_{2}p_{\alpha\beta}=K_{1}
\]
 where $C_{i}$ and $K_{i}$ are constants.

Also from the restriction ${\displaystyle \sum_{j}}p_{\alpha j}=1$
we get a equation of the form 
\[
p_{\alpha\alpha}+p_{\alpha\beta}=K_{2}
\]

These two final equations make the parameters $p_{\alpha\alpha}$
and $p_{\alpha\beta}$ identifiable.

\textbf{Step 3:}

Consider $p_{\beta j}\:\;j=\alpha,\beta$. 

Let the $\beta^{th}$ column has a element $f_{i\beta}=1\:,i\neq\alpha,\beta$
and we choose a $r$ such that $r\neq\alpha,\beta.$

Then $P(S_{i\beta\:\_\:r})>0$ which means $P(S_{\pi})>0$ where $\pi=i\beta\:\_\;r.$

Let $D=\{b:f_{\beta b}=0\quad\textrm{and}\quad f_{br}=0\}.$ Then
$p_{\pi}$ is of the form 

\[
p_{\pi}=p_{i\beta}(\sum_{b\in D}p_{\beta b}p_{br})
\]

Clearly $\alpha,\beta\in D$ and hence we get
\[
p_{\pi}=(\sum_{b\in D,b\neq\alpha,\beta}p_{\beta b}p_{br}+p_{\beta\alpha}p_{\alpha r}+p_{\beta\beta}p_{\beta r})p_{i\beta}
\]

Since $P(S_{\pi})>0,$ lemma 11 implies that $p_{\pi}$ is identifiable.
Hence
\[
p_{\pi}=(\sum_{b\in D,b\neq\alpha,\beta}p_{\beta b}p_{br}+p_{\beta\alpha}p_{\alpha r}+p_{\beta\beta}p_{\beta r})p_{i\beta}=\textrm{Known Constant}
\]

Since $p_{i\beta},i\neq\alpha$ and $p_{\beta b},b\neq\alpha,\beta$
and $p_{\alpha r},r\neq\alpha,\beta$ and $p_{ab},\quad a,b\neq\alpha,\beta$
are all estimable from the above equation we get a equation of the
form 
\[
C_{1}p_{\beta\alpha}+C_{2}p_{\beta\beta}=K_{1}
\]
 where $C_{i}$ and $K_{i}$ are constants.

Also from the restriction ${\displaystyle \sum_{j}}p_{\beta j}=1$
we get a equation of the form 
\[
p_{\beta\alpha}+p_{\beta\beta}=K_{2}
\]

These two final equations make the parameters $p_{\beta\alpha}$ and
$p_{\beta\beta}$ identifiable.

Thus all the parameters for $\mathbf{M}$ are identifiable. Hence
for any matrix $\mathbf{M}\in\mathbb{C}_{3}$, we have $\mathbf{M}\in\mathcal{F}$.
Thus $\mathbb{C}_{3}\subseteq\mathcal{I}$.
\end{proof}
\bibliographystyle{amsplain}
\nocite{*}
\bibliography{research2}

\end{document}